\numberwithin{equation}{section}
\newcommand\eps{\varepsilon}
\DeclareMathOperator{\Vol}{Vol}
\DeclareMathOperator{\sgn}{sgn}
\newtheorem{theorem}{Theorem}[section]
\newtheorem{definition}[theorem]{Definition}
\newtheorem{lemma}[theorem]{Lemma}
\newtheorem{proposition}[theorem]{Proposition}
\newtheorem{question}[theorem]{Question}
\newtheorem*{remark}{Remark}
\begin{document}

\begin{frontmatter}

\title[Distribution of points on the real line]{Distribution of points on the real line under a class of repulsive potentials}

\begin{aug}
\author{\fnms{Roni} \snm{Edwin}\ead[label=e1]{roni.edwin@yale.edu}}
\address{Department of Mathematics\\ Yale University\\ 219 Prospect Street\\ New Haven, CT\\ USA \\ \printead{e1}}
\end{aug}
\received{\sday{23} \smonth{3} \syear{2025}}

\begin{abstract}
In a 1979 paper, Ventevogel and Nijboer showed that classical point particles interacting via the pair potential $\phi(x)=\left(1+x^4\right)^{-1}$ are not equally spaced in their ground states in one dimension when the particle density is high, in contrast with many other potentials such as inverse power laws or Gaussians. In this paper, we explore a broad class of potentials for which this property holds; we prove that under the potentials $f_\alpha(x)=\left(1+x^\alpha\right)^{-1}$, when $\alpha>2$ is an even integer, there is a corresponding $s_\alpha>0$ such that under density $\rho={n}/{s_\alpha}$, the configuration that places $n$ particles at each point of $s_\alpha\mathbb{Z}$ minimises the average potential energy per particle and is therefore the exact ground state. In other words, the particles form clusters, while the clusters do not approach each other as the density increases; instead they maintain a fixed spacing. This is, to the best of our knowledge, the first rigorous analysis of such a ground state for a naturally occurring class of potential functions.
\end{abstract}



\tableofcontents

\end{frontmatter}


\section{Introduction} 
Potential energy minimisation arises out of trying to understand why systems of interacting particles often arrange themselves in crystalline structures at low temperatures \cite{radin1987low}. The classical model involves an infinite system of particles $C \subset \mathbb{R}^d$ interacting under a pair potential $f$, for example, an inverse power law like $f:x \mapsto x^{-\alpha}$, or a Gaussian $f:x \mapsto e^{-\beta x^2}$, and the goal is to arrange the particles so as to minimise the average potential energy per particle. In the case of repulsive potentials, we often constrain the particles by requiring they have a fixed density $\rho$, given by the limit \begin{align*}
    \rho=\lim_{r \to \infty}\frac{\#\left(C\cap B_r^d\right)}{\Vol\left(B_r^d\right)},
\end{align*} where $B_r^d$ is the ball of radius $r$ in $\mathbb{R}^d$, and $\Vol\left(\cdot \right)$ represents the $d$-dimensional Lebesgue measure. Given a potential $f:(0,\infty) \to \mathbb{R}$ or $f:[0,\infty) \to \mathbb{R}$,
for such a configuration $C=\left(x_i\right)_{i \in \mathbb{Z}}$, we define its lower $f$-energy, denoted $E_f(C)$, as the following limit inferior: \begin{equation}
    E_f(C)=\liminf_{r \to \infty}\frac1{\#\left(C \cap B_r^d\right)}\sum_{\substack{x_i,x_j \in C \cap B_r^d\\ i\neq j}}f\mathopen{}\left(\left|x_i-x_j\right|\right).
    \label{lowerfenergydefintiion}
\end{equation} This coincides with the idea of lower $f$-energy presented in Section $1$ of \cite{cohn2022universal}. Note the sum above is an ordered sum over the points. If the limit above exists, and not just the limit inferior, we simply refer to that as the $f$-energy of $C$. Perhaps the simplest case of symmetry and crystalline structure is when the configuration $C$ is a full-rank lattice $\Lambda$, that is the integer span of $d$ linearly independent vectors $v_1,v_2, . . . , v_d$ in $\mathbb{R}^d$. 
In this case, its density is the reciprocal of the volume of its \emph{fundamental cell}, $\mathbb{R}^d/\Lambda$, so \begin{equation*}
\rho=\frac1{\Vol\left(\mathbb{R}^d/\Lambda\right)},
\end{equation*} and its $f$-energy is given by \begin{equation}
    E_f(\Lambda)=\sum_{v \in \Lambda\setminus \{0\}}f\left(\left|v\right|\right),
    \label{energyoflatticegen}
\end{equation} assuming the sum above is absolutely convergent (equation 1.1 in \cite{cohn2022universal}). With this setup, the potential energy minimisation problem is as follows. Among such configurations of points $C$ with fixed density $\rho$, the goal is to then find a configuration $C^*$ of density $\rho$ that minimises lower $f$-energy. So $$E_f(C)\ge E_f\left(C^*\right)$$ for all $C$ with density $\rho$. We borrow terminology from \cite{cohn2022universal}, and refer to such a minimiser $C^*$ as a \emph{ground state} for $f$. The most interesting cases are when certain configurations are ground states for a large class of potentials. This leads us to the question of which potentials to consider. A natural choice might be to require convexity, because it means the strength of the repulsion gets weaker at larger distances; a more stringent condition would be to require complete monotonicity. Recall that a completely monotonic function $g:(0,\infty) \to \mathbb{R}$ is one that satisfies $(-1)^kg^{(k)}\ge 0$ for all $k\ge 0$. This leads us to the universal optimality theory described in \cite{cohn2007universally} and \cite{cohn2022universal}:
\begin{definition}
     Let $C$ be a point configuration in $\mathbb{R}^d$ with density $\rho > 0$. We say $C$ is \emph{universally optimal} if it minimises $f$-energy (among configurations of density $\rho$) whenever $f :(0,\infty) \to \mathbb{R}$ is a completely monotonic function of squared distance. So $f(r)=g\left(r^2\right)$, where $g$ is completely monotone.
\end{definition} In the case of $d=1$, Nijboer and Ventevogel proved in \cite{ventevogel1979configuration} that the equidistant configuration $\frac1{\rho}\mathbb{Z}$ is universally optimal for each $\rho>0$, with Cohn and Kumar \cite{cohn2007universally} giving a different proof. Its energy in this case is given by  \begin{equation}
    E_f\left(\frac1{\rho}\mathbb{Z}\right)=\sum_{n \in \mathbb{Z}\setminus\{0\}}f\left(\left|\frac{n}{\rho}\right|\right),
    \label{energyoflattice}
\end{equation}
 from \eqref{energyoflatticegen}. In higher dimensions, there is not a lot that is known about universal optimality save in some special cases. In dimensions $d=8$ and $d=24$, it has been proven in \cite{cohn2022universal} that the $E_8$ and Leech lattices respectively are universally optimal. Another interesting case is dimension $d=2$, where it is suspected that the $A_2$ root lattice, the hexagonal lattice, is universally optimal. It has been proven (see \cite{henn2016hexagonal} and \cite{montgomery1988minimal}) that the $A_2$ root lattice is universally optimal among lattices, but proving its universal optimality in general remains open. It might seem strange that we have positive results in dimensions $8$ and $24$, but none so far for $d=2$; we might expect the problem to increase in difficulty as the number of dimensions increases. It turns out however that the distances between the vectors in the $E_8$ and Leech lattices are nice enough to allow for some clever interpolation to prove they are indeed universally optimal (see \cite{radchenko2019fourier}, Theorem 1.7 in \cite{cohn2022universal}).
 
One interesting phenomenon that occurs in $\mathbb{R}$ absent in higher dimensions is the behavior of the ground states of repulsive ($f'<0$) convex potentials $f$. Nijboer and Ventevogel further proved in \cite{ventevogel1979configuration} that under decreasing convex potentials $f$, the configuration $\frac1{\rho}\mathbb{Z}$ is a ground state for each density $\rho>0$.  This seems to suggest the ground state property of the equidistant configuration $\frac1{\rho}\mathbb{Z}$ is more robust in $\mathbb{R}$. Given these results, it is natural to ask if there are other potentials for which the equidistant configuration is a ground state. One might hope for example, that the equidistant configuration $\frac1{\rho}\mathbb{Z}$ might minimise lower $f$-energy for repulsive potentials. It is not an entirely unreasonable assumption; the potential being repulsive means smaller energies when the points are further apart, so it kind of makes sense that the best way to arrange the points, fixing the density, while making them as far apart as possible would be to just make them equally spaced. It turns out however that this assumption is false. In particular the authors in \cite{nijboer1985minimum}, \cite{ventevogel1979configuration} showed that under the potential $\phi:x \mapsto \left(x^4+1\right)^{-1}$ with density $\rho=2$, $\frac1{2}\mathbb{Z}$ is sub-optimal by explicitly constructing a configuration of lower energy.

In this paper, we consider the behaviour of the ground states of the potential $x \mapsto \left(x^4+1\right)^{-1}$, and more broadly potentials of the form $f_\alpha\colon x \mapsto \left(x^\alpha+1\right)^{-1}$ when $\alpha>2$ is an even integer; these are natural potentials which can be thought of as a smooth approximation to a hard cut-off potential, and it is a problem which feels like we ought to be able to answer.  We formulate this in the following question:
\begin{question}
    Consider the potential $f_\alpha:x \mapsto \left(x^\alpha+1\right)^{-1}$ for $\alpha>2$ an even integer. What do the ground state configurations for $f_\alpha$ look like at high densities?
    \label{thequestion}
\end{question}
\begin{remark}
    The reason we consider $\alpha>2$ is that in general, for $\alpha\in [0,2]$, the function $x \mapsto \left(x^\alpha+1\right)^{-1}$ is a completely monotone function of squared distance, so the characterisation of its ground states is included as a special case of the work of Nijboer and Ventevogel in \cite{nijboer1985minimum}.
\end{remark}
It may be a bit ambitious to hope for a precise quantitative answer. We concern ourselves with high densities because this is where the interaction due to the potential comes up. Experimentally, the particles in ground state configurations seem to cluster together in roughly equally spaced clusters, shown by the following figures, for some values of $\alpha$ and density $\rho$:
\begin{figure}[H]
\minipage{0.5\textwidth}
\centering
  \begin{tikzpicture}[scale=0.18]
\draw (-15,0)--(15,0);
\foreach \x in {-15, -10, ..., 15}
\draw (\x,-0.5)--(\x,0.5);
\foreach \x in {-15, -10, ..., 15}
\draw (\x,-0.5) node[below] {$\x$};
\fill (4.099505310688963,0) circle (0.37416573867739417);
\fill (1.5451223313404543,0) circle (0.223606797749979);
\fill (7.752914507034196,0) circle (0.2645751311064591);
\fill (6.536773674742831,0) circle (0.2645751311064591);
\fill (-12.077133991020064,0) circle (0.37416573867739417);
\fill (14.895490208672552,0) circle (0.2645751311064591);
\fill (-3.5336056575206958,0) circle (0.28284271247461906);
\fill (0.3058475306542244,0) circle (0.316227766016838);
\fill (2.600997699242492,0) circle (0.2645751311064591);
\fill (10.23670190613861,0) circle (0.28284271247461906);
\fill (13.622452783930832,0) circle (0.28284271247461906);
\fill (-1.0509771174632279,0) circle (0.2645751311064591);
\fill (-7.9313173505650845,0) circle (0.34641016151377546);
\fill (-13.83877903464744,0) circle (0.37416573867739417);
\fill (-4.90680878355247,0) circle (0.30000000000000004);
\fill (-9.364900358564002,0) circle (0.2645751311064591);
\fill (11.562483908113304,0) circle (0.28284271247461906);
\fill (-10.580928186961982,0) circle (0.2645751311064591);
\fill (-6.362907632872521,0) circle (0.316227766016838);
\fill (8.966613468543523,0) circle (0.2645751311064591);
\fill (5.479376581022896,0) circle (0.223606797749979);
\fill (-2.2686507517116477,0) circle (0.2645751311064591);
\fill (12.592338816546254,0) circle (0.2);
\end{tikzpicture}
  \caption{Plot for $\alpha=4$, $\rho=8$}\label{1st}
\endminipage
\minipage{0.5\textwidth}%
 \centering
  \begin{tikzpicture}[scale=0.18]
\draw (-15,0)--(15,0);
\foreach \x in {-15, -10, ..., 15}
\draw (\x,-0.5)--(\x,0.5);
\foreach \x in {-15, -10, ..., 15}
\draw (\x,-0.5) node[below] {$\x$};
\fill (-0.9178685528389252,0) circle (0.458257569495584);
\fill (-4.670413886485573,0) circle (0.37416573867739417);
\fill (-5.937798970801928,0) circle (0.458257569495584);
\fill (-2.304072658694247,0) circle (0.43588989435406744);
\fill (3.0096611738231234,0) circle (0.4795831523312719);
\fill (-11.878257469951263,0) circle (0.4242640687119285);
\fill (7.909367144180079,0) circle (0.4242640687119285);
\fill (-14.644810369639409,0) circle (0.4242640687119285);
\fill (9.078879625868371,0) circle (0.36055512754639896);
\fill (5.598770254973633,0) circle (0.41231056256176607);
\fill (-13.261150943676155,0) circle (0.469041575982343);
\fill (-8.433733365713238,0) circle (0.3872983346207417);
\fill (6.740143108104943,0) circle (0.36055512754639896);
\fill (14.286312092220665,0) circle (0.41231056256176607);
\fill (0.44745306836660015,0) circle (0.4242640687119285);
\fill (10.220561486348371,0) circle (0.41231056256176607);
\fill (-7.256257131878139,0) circle (0.4);
\fill (-10.622435654611602,0) circle (0.4);
\fill (1.6777076021053154,0) circle (0.3872983346207417);
\fill (-9.51076882055521,0) circle (0.36055512754639896);
\fill (12.905294126469244,0) circle (0.4795831523312719);
\fill (-3.557139460330677,0) circle (0.3872983346207417);
\fill (11.502896125558618,0) circle (0.4242640687119285);
\fill (4.366402770092386,0) circle (0.4);
\end{tikzpicture}
  \caption{Plot for $\alpha=4$, $\rho=16$}\label{2nd}
\endminipage
\end{figure}
\begin{figure}[H]
\minipage{0.5\textwidth}
  \centering
  \begin{tikzpicture}[scale=0.18]
\draw (-15,0)--(15,0);
\foreach \x in {-15, -10, ..., 15}
\draw (\x,-0.5)--(\x,0.5);
\foreach \x in {-15, -10, ..., 15}
\draw (\x,-0.5) node[below] {$\x$};
\fill (-14.37052529346418,0) circle (0.316227766016838);
\fill (-12.895864477184345,0) circle (0.37416573867739417);
\fill (-11.397531948573523,0) circle (0.33166247903554);
\fill (-9.936798396267433,0) circle (0.34641016151377546);
\fill (-8.454757165135444,0) circle (0.34641016151377546);
\fill (-7.017101322156099,0) circle (0.316227766016838);
\fill (-5.559827720035753,0) circle (0.36055512754639896);
\fill (-4.189409755657393,0) circle (0.2645751311064591);
\fill (-2.7828930863445214,0) circle (0.3872983346207417);
\fill (-1.188854352336406,0) circle (0.3872983346207417);
\fill (0.42184187039960236,0) circle (0.4);
\fill (1.844202534778503,0) circle (0.2645751311064591);
\fill (3.122078885815801,0) circle (0.30000000000000004);
\fill (4.53441130934097,0) circle (0.34641016151377546);
\fill (5.973373953293665,0) circle (0.316227766016838);
\fill (7.238024442646638,0) circle (0.2449489742783178);
\fill (8.258294191466309,0) circle (0.223606797749979);
\fill (9.356542091689775,0) circle (0.2645751311064591);
\fill (10.661105873892977,0) circle (0.316227766016838);
\fill (12.055727664512288,0) circle (0.316227766016838);
\fill (13.495838647085925,0) circle (0.34641016151377546);
\fill (14.959480285987196,0) circle (0.33166247903554);
\end{tikzpicture}
  \caption{Plot for $\alpha=6$, $\rho=10$}\label{3rt}
\endminipage\hfill
\minipage{0.5\textwidth}%
  \centering
  \begin{tikzpicture}[scale=0.18]
\draw (-15,0)--(15,0);
\foreach \x in {-15, -10, ..., 15}
\draw (\x,-0.5)--(\x,0.5);
\foreach \x in {-15, -10, ..., 15}
\draw (\x,-0.5) node[below] {$\x$};
\fill (-14.774488114010138,0) circle (0.5196152422706632);
\fill (-13.403987530759222,0) circle (0.4242640687119285);
\fill (-12.09749012944639,0) circle (0.458257569495584);
\fill (-10.806351156378511,0) circle (0.41231056256176607);
\fill (-9.541457548369634,0) circle (0.43588989435406744);
\fill (-8.21014555806267,0) circle (0.469041575982343);
\fill (-6.841480982995437,0) circle (0.469041575982343);
\fill (-5.539355388625705,0) circle (0.41231056256176607);
\fill (-4.235150599694868,0) circle (0.469041575982343);
\fill (-2.868925256716342,0) circle (0.469041575982343);
\fill (-1.5493693486886846,0) circle (0.4242640687119285);
\fill (-0.27344008721257,0) circle (0.43588989435406744);
\fill (1.0223029160551558,0) circle (0.43588989435406744);
\fill (2.3831772207569983,0) circle (0.5);
\fill (3.7337865083563737,0) circle (0.4242640687119285);
\fill (5.024417027332933,0) circle (0.447213595499958);
\fill (6.146465284130804,0) circle (0.316227766016838);
\fill (7.2470398891706465,0) circle (0.43588989435406744);
\fill (8.55292468640791,0) circle (0.447213595499958);
\fill (9.927937594062069,0) circle (0.5);
\fill (11.337686541290669,0) circle (0.4795831523312719);
\fill (12.715538355957438,0) circle (0.469041575982343);
\fill (14.031373136691457,0) circle (0.4242640687119285);
\end{tikzpicture}
  \caption{Plot for $\alpha=6$, $\rho=20$}\label{4th}
\endminipage
\end{figure}
In the figures above the sizes of the dots correspond to the number of points in the cluster. As can be seen, the spacing of the clusters remains roughly the same, with the number of points in the clusters increasing as the density increases. Our results make this observed phenomenon precise; for example, we show that for the potential $x \mapsto \left(x^4+1\right)^{-1}$, the particles in the ground state configuration at high densities coalesce into clusters consisting of an equal number of points with a spacing of $\sqrt{2}$. More generally, when $\alpha>2$ is an even integer, the ground state at high densities for the potential $x \mapsto \left(x^\alpha+1\right)^{-1}$ coalesce into clusters consisting of an equal number points with spacing $s_\alpha>1$ (with $s_\alpha \to 1$ as $\alpha \to \infty$). This can be formulated as follows:
\begin{theorem}
   For each $\alpha>2$ that is an even integer, there is a unique $s_\alpha>0$ such that the following holds: Let $\rho={n}/{s_\alpha}$ with $n \in \mathbb{N}$ denote the density under consideration, and let $C \subset \mathbb{R}$ be a point configuration with density $\rho$. Let $\mathcal{C}_\rho$ be the configuration that places $n$ particles at each point of the lattice $s_\alpha\mathbb{Z}$. 
    Explicitly, $$\mathcal{C}_\rho=\bigcup_{r=-\infty}^\infty\left\{x_{r,s}\right\}_{s=1}^n \text{ where } x_{r,s}=s_\alpha r.$$ Then $\mathcal{C}_\rho$ is a ground state configuration of density $\rho$, under the potential $f_\alpha$.
    \label{minimise}
\end{theorem}
We expect this clustering behaviour for the potential $x\mapsto \left(x^\alpha+1\right)^{-1}$ to hold in general when $\alpha>2$, not just even integral values. We did attempt to prove this, though we felt the denseness and length of the proof did not justify the improvement on our results.

The paper is organised as follows: In Section \ref{piamia}, we introduce the ideas of discrete and continuous energy, and present a strengthened version of Theorem \ref{minimise} phrased in the language of continuous energy. In Section \ref{linprogbounds} we outline the structure of the proof of this strengthened theorem, which is carried out in Sections \ref{auxsec}, \ref{widehatpsi-alphapos}, and \ref{chamere}. 

We use an oscillatory factor of $-2\pi$ and a normalisation of $1$ for the Fourier transform. That is, for an integrable function $u \in L^1\mathopen{}\left(\mathbb{R}\right)\mathclose{}$, we define its Fourier transform $\widehat{u}:\mathbb{R} \to \mathbb{C}$ by \begin{equation}
      \widehat{u}(\xi)=\int_{\mathbb{R}}u(x)e^{-2\pi i x\xi}\textup{d}x.
  \end{equation} 

 \section{Discrete and continuous energy.}\label{piamia}
We briefly introduce the discrete energy problem as in Chapter $2$ of \cite{borodachov2019discrete}. The general setup described is as follows. We have a metric space $\left(A,\rho\right)$, (here $\rho$ is the metric) and a kernel $K:A\times A \to \mathbb{R} \cup \{+\infty\}$. Given an $N$-point configuration $\omega_N=\left\{x_i\right\}_{i=1}^N$ of points in $A$, we define their $K$-energy by \begin{equation}
     E_K\left(\omega_N\right)=\sum_{i=1}^N\sum_{\substack{j=1 \\ j\neq i}}^NK\mathopen{}\left(x_i,x_j\right)\mathclose{}.
 \end{equation} We will denote the infimum $K$-energy by \begin{equation}
     e_K\mathopen{}\left(A,N\right)\mathclose{}:=\inf\left\{E_K\left(\omega_N\right):\omega_N \subset A\right\}.
 \end{equation} In \cite{borodachov2019discrete}, the infimum energy is denoted by $\mathcal{E}_K(A,N)$, but we use the calligraphic $E$ to denote a different quantity, hence the lower case $e$ here. It is shown in \cite{borodachov2019discrete} that in the case where the kernel $K$ is lower semi-continuous and the set $A$ is compact, the infimum energy is in fact attained, so there is an \emph{optimal  $N$-point configuration}, denoted $\omega_N^*$, such that \begin{equation}
     E_K\left(\omega_N^*\right)=e_K(A,N).
     \label{optimalmeasure}
 \end{equation}
 We conclude this brief introduction to discrete energy with an important proposition about the asymptotic behaviour of the infimum energy as a function of $N$, for a fixed kernel $K$ and set $A$, which relates this idea of discrete energy to the idea of continuous energy to be introduced shortly:
 \begin{proposition}[Proposition 2.1.1 in \cite{borodachov2019discrete}] Let $A$ be an infinite set and $K :A\times A \to \mathbb{R}\cup \{+\infty\}$ be an arbitrary kernel. Then the sequence \begin{equation*}
     \left\{\frac{e_K(A,N)}{N(N-1)}\right\}_{N=2}^\infty
 \end{equation*} is non-decreasing with $N$.
 \label{increasingnergy}
 \end{proposition}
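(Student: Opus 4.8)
The plan is to show that the sequence $a_N := e_K(A,N)/\big(N(N-1)\big)$ satisfies $a_N \le a_{N+1}$ for every $N \ge 2$ by an averaging argument over sub-configurations. Fix $N$ and take any $(N+1)$-point configuration $\omega_{N+1} = \{x_1,\dots,x_{N+1}\} \subset A$. For each index $k \in \{1,\dots,N+1\}$, let $\omega_{N+1}^{(k)}$ denote the $N$-point sub-configuration obtained by deleting $x_k$. The key combinatorial identity is that summing $E_K\big(\omega_{N+1}^{(k)}\big)$ over all $k$ counts each ordered pair $(x_i,x_j)$ with $i \ne j$ exactly $N-1$ times (it survives the deletion of $x_k$ precisely when $k \notin \{i,j\}$, which happens for $N-1$ choices of $k$), so
\begin{equation*}
\sum_{k=1}^{N+1} E_K\big(\omega_{N+1}^{(k)}\big) = (N-1)\, E_K\big(\omega_{N+1}\big).
\end{equation*}

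Next I would bound the left-hand side below using the definition of infimal energy: each term satisfies $E_K\big(\omega_{N+1}^{(k)}\big) \ge e_K(A,N)$, since $\omega_{N+1}^{(k)}$ is an admissible $N$-point configuration in $A$. Hence $(N-1)\,E_K\big(\omega_{N+1}\big) \ge (N+1)\,e_K(A,N)$, that is,
\begin{equation*}
E_K\big(\omega_{N+1}\big) \ge \frac{N+1}{N-1}\, e_K(A,N).
\end{equation*}
Since this holds for every $(N+1)$-point configuration $\omega_{N+1}$, taking the infimum over $\omega_{N+1}$ on the left gives $e_K(A,N+1) \ge \frac{N+1}{N-1}\, e_K(A,N)$. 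Dividing both sides by $(N+1)N$ yields
\begin{equation*}
\frac{e_K(A,N+1)}{(N+1)N} \ge \frac{e_K(A,N)}{N(N-1)},
\end{equation*}
which is exactly $a_{N+1} \ge a_N$, as desired. The hypothesis that $A$ is infinite guarantees that $(N+1)$-point configurations (with distinct points, or at least $N+1$ points) exist for all $N$, so the quantities in question are well-defined; the values $\pm\infty$ cause no trouble since the inequalities are preserved in the extended reals (one should note that if $e_K(A,N) = +\infty$ then $e_K(A,N+1) = +\infty$ as well, by the same deletion argument read in reverse, and if $e_K(A,N) = -\infty$ the inequality is vacuous).

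There is no serious obstacle here; the only point requiring a little care is the bookkeeping in the combinatorial identity — making sure the ordered-pair count is $N-1$ and not something else — and handling the degenerate possibility of infinite energies cleanly. I would present the finite-energy case in full and remark that the infinite cases follow a fortiori.
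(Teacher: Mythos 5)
Your proof is correct and uses precisely the standard averaging-over-deletions argument that Borodachov, Hardin, and Saff give for Proposition 2.1.1 (which the paper simply cites rather than reproving); the combinatorial count that each ordered pair survives $N-1$ of the $N+1$ deletions is exactly right, and the passage to the infimum and the division by $(N+1)N$ are both sound. The only small stylistic note is that the degenerate cases need even less care than you suggest: since $K$ takes values in $\mathbb{R}\cup\{+\infty\}$, the inequality $(N-1)\,e_K(A,N+1)\ge(N+1)\,e_K(A,N)$ handles $e_K(A,N)=+\infty$ directly (forcing $e_K(A,N+1)=+\infty$), and $e_K(A,N)=-\infty$ makes the conclusion trivially true, so no separate ``reverse'' argument is needed.
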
 A natural thing to ask is if the sequence described in the proposition above has some interpretation, for example if it converges to some limit. Here is where the idea of continuous energy comes in. The idea of continuous energy is similar to that of discrete energy introduced above, but instead of dealing with a finite number of points, we look at Borel measures. Here $A$ denotes a compact infinite set in $\mathbb{R}^d$, and $\mathcal{M}(A)$ the collection of all Borel probability measures supported on $A$. Let $K:A\times A \to (-\infty,\infty]$ be a symmetric, lower semi-continuous kernel. We then define the \emph{continuous $K$-energy of $\mu$}, for a measure $\mu \in \mathcal{M}(A)$, by \begin{equation}
    I_K[\mu]:=\int_A\int_{A}K(x,y)\textup{d}\mu(x)\textup{d}\mu(y).
    \label{continuousenergydefinition.}
\end{equation} The \emph{Wiener constant}, denoted $W_K(A)$ is the smallest such energy. That is, \begin{equation*}
    W_K(A):=\inf\left\{I_K[\mu]:\mu \in \mathcal{M}(A)\right\}.
\end{equation*} Under some mild conditions on the kernel $K$, the Wiener constant is actually achieved by some measure $\mu$. Specifically, if the set $A$ is infinite and compact, the kernel $K$ is symmetric and lower semi-continuous, Lemma 4.1.3 in \cite{borodachov2019discrete} guarantees the existence of an \emph{equilibrium measure} $\mu^*$, so \begin{equation*}
    I_K\left[\mu^*\right]=W_K(A).
\end{equation*}
Moreover, the sequence in Proposition \ref{increasingnergy} actually converges to $W_K(A)$. That is, \begin{equation*}
    \lim_{N \to \infty}\frac{e_K(A,N)}{N^2}=W_K(A).
\end{equation*}
Equivalently,  \begin{equation}
    \lim_{N \to \infty}\frac{E_K\left(\omega_N^*\right)}{N^2}=W_K(A),
    \label{relationdiscontfin}
\end{equation} where $\omega_N^*$ is an optimal $N$-point configuration, as in \eqref{optimalmeasure}.
This relation can be viewed as a sort of approximation. That is, for large values of $N$, the discrete energy problem is essentially `approximating' the infimum continuous $K$-energy, $W_K(A)$.

We hope to apply this relation between discrete and continuous energy in answering Question \ref{thequestion}: Lower $f$-energy is essentially a case of discrete energy, but for when the set $A=\mathbb{R}$, and we replace the number of points $N$ with the density $\rho$. We define a continuous analogue of lower $f$-energy, and answer the continuous energy problem in that case.

\subsection{Continuous energy on \texorpdfstring{$\mathbb{R}$.}{Lg}}
We are going to define continuous energy for Borel measures on the whole of $\mathbb{R}$, that generalises the idea of lower $f$-energy for point configurations. Let $\mathcal{U}$ be the set of Borel measures on $\mathbb{R}$ that have `average mass' $1$. Explicitly \begin{equation}
    \mathcal{U}=\left\{\mu: \mu \text{ is a Borel measure on } \mathbb{R} \text{ and } \lim_{r \to \infty}\frac{\mu\left([-r,r]\right)}{2r}=1 \right\}.
    \label{mathcalUsetdef}
\end{equation} For a given continuous potential $f:[0,\infty) \to \mathbb{R}$, we define the \emph{continuous $f$-energy} of a measure $\mu \in \mathcal{U}$, by \begin{equation}
    \mathcal{E}_f(\mu)=\liminf_{r \to \infty}\frac1{\mu\left([-r,r]\right)}\int_{[-r,r]}\int_{[-r,r]}f(|x-y|)\textup{d}\mu(x)\textup{d}\mu(y).
    \label{energyofmathcalEdef}
\end{equation} In the case where the function $f$ extends to an even function on $\mathbb{R}$, we will drop the absolute value in the argument of $f$. We denote by $e(f)$ the infimum of continuous $f$-energy on $\mathbb{R}$, so \begin{equation}
e(f)=\inf\left\{\mathcal{E}_f(\mu): \mu \in \mathcal{U}\right\}.
    \label{infimalinfinteconen}
\end{equation} If the infimum $f$-energy $e(f)$ is achieved by some measure $\mu^* \in \mathcal{U}$, so $\mathcal{E}_f\mathopen{}\left(\mu^*\right)\mathclose{}=e(f)$, we will call $\mu^*$ an \emph{equilibrium measure for $f$.}
To get the relation between lower $f$-energy and this idea of continuous energy, we introduce a similar idea to the normalised counting measure on a finite point configuration.
\begin{definition}
    For a general configuration $C$ in $\mathbb{R}$ of density $\rho>0$, let $\omega(C)$ be the \emph{normalised counting measure on $C$}, given by
  \begin{equation}
    \omega(C)=\frac1{\rho}\sum_{x \in C} \delta\mathopen{}\left(x\right)\mathclose{},
        \label{defomegaC2}
    \end{equation} where $\delta(x)$ is the Dirac measure at $x$. Importantly, $\omega(C) \in \mathcal{U}$.
    \label{defomegaC}
\end{definition}
This definition implies the following relation between $E_f(C)$ and $\mathcal{E}_f\left(\omega(C)\right)$:\begin{equation}
\mathcal{E}_f\left(\omega(C)\right)=\frac1{\rho}E_f\left(C\right)+\frac{f(0)}{\rho},
    \label{distocont}
\end{equation}
which follows from plugging in the definition of $\omega(C)$ into the formula above for $\mathcal{E}_f(\mu)$.
As an example, using the formula for $E_f(s\mathbb{Z})$ in \eqref{energyoflattice}, we get for $s>0$, \begin{equation}
\mathcal{E}_f\left(\omega(s\mathbb{Z})\right)=\sum_{n \in \mathbb{Z}} sf(sn).
    \label{energyoflatticeocntinuous}
\end{equation}
With these ideas laid down, we present the following strengthening of Theorem \ref{minimise}:
\begin{theorem}
Let $\alpha>2$ be an even integer. Then there is a unique $s_\alpha>0$ such that continuous $f_\alpha$-energy as defined in \eqref{energyofmathcalEdef}, $\mathcal{E}_{f_\alpha}$, is minimised by the measure\begin{equation}
\omega\left(s_\alpha\mathbb{Z}\right)=\sum_{n \in \mathbb{Z}} s_\alpha\delta\mathopen{}\left(s_\alpha n\right)\mathclose{},
\label{minimisingmeasuredef}
    \end{equation} where $\delta(x)$ is the Dirac measure at $x$. From \eqref{energyoflatticeocntinuous}, its energy is given by \begin{equation}
\mathcal{E}_{f_\alpha}\mathopen{}\left(\omega\mathopen{}\left(s_\alpha\mathbb{Z}\right)\mathclose{}\right)\mathclose{}=s_\alpha\sum_{n \in \mathbb{Z}}  f_\alpha\mathopen{}\left(s_\alpha n\right)\mathclose{}.
        \label{formenergy}
    \end{equation}
    \label{measure}
\end{theorem} 
Notice how Theorem \ref{measure} implies Theorem \ref{minimise}: For each $n \in \mathbb{N}$, let $\mathcal{C}_{\frac{n}{s_\alpha}}$ denote the point configuration that places $n$ particles at each point of the lattice $s_\alpha\mathbb{Z}$, so $\omega\left(\mathcal{C}_{ \frac{n}{s_\alpha}}\right)=\sum_{n \in \mathbb{Z}}s_\alpha\delta\mathopen{}\left(s_\alpha n\right)\mathclose{}$.
Then for any configuration $C$ of density $\frac{n}{s_\alpha}$, \begin{equation*}
    \mathcal{E}_{f_\alpha}\mathopen{}\left(\omega(C)\right)\mathclose{}\ge \mathcal{E}_{f_\alpha}\mathopen{}\left(\omega\left(\mathcal{C}_{\frac{n}{s_\alpha}}\right)\right)\mathclose{}.
\end{equation*} Applying the identity in \eqref{distocont} implies \begin{equation*}
     \frac1{\rho}E_{f_\alpha}(C)+\frac{f_\alpha(0)}{\rho}\ge \frac1{\rho}E_{f_\alpha}\mathopen{}\left(\mathcal{C}_{\frac{n}{s_\alpha}}\right)\mathclose{}+\frac{f_\alpha(0)}{\rho},
\end{equation*}  and rearranging gives $E_{f_\alpha}(C)\ge E_{f_\alpha}\mathopen{}\left(\mathcal{C}_{\frac{n}{s_\alpha}}\right)$, which is the statement of Theorem \ref{minimise}.\newline

       \noindent
       \newline

       \noindent
\section{Structure of the proof}\label{linprogbounds}
Here we outline how the proof of Theorem \ref{measure} is presented.  The first thing to note is that from the formula for $\mathcal{E}_f\mathopen{}\left(\omega(t\mathbb{Z})\right)\mathclose{}$ in \eqref{energyoflatticeocntinuous}, $s_\alpha$ necessarily minimises \begin{equation*}
t\mapsto \sum_{n \in \mathbb{Z}} tf_\alpha(tn)
\end{equation*} over $t>0$. For $\alpha>2$, it turns out that this function does have a unique minimum $s_\alpha$ with $s_\alpha>1$. To that end, we rescale the potential $f_\alpha$, defining $F_\alpha \colon \mathbb{R} \to [0,\infty)$ by\begin{equation}
F_\alpha(x)=\frac1{1+s_\alpha^\alpha x^\alpha},
    \label{capFalpha}
\end{equation} so that to prove Theorem \ref{measure}, it is equivalent to show the counting measure on $\mathbb{Z}$, $\sum_{n \in \mathbb{Z}}\delta\mathopen{}\left(n\right)\mathclose{}$, minimises continuous $F_\alpha$-energy. We now present an analogous proposition to the linear programming bounds in \cite{cohn2022universal}, but for measures instead of discrete point configurations. We start by presenting the statement of the linear programming bounds:
\begin{proposition}[Proposition 2.2 in \cite{gausscoremodel}]
Let $f:(0,\infty) \to [0,\infty)$ be any function, and suppose
$h: \mathbb{R}^n \to \mathbb{R}$ is continuous, integrable, and positive definite (so $\widehat{h}\ge 0$). If $h(x)\le  h\left(|x|\right)$ for all $x \in \mathbb{R}^d\setminus\{0\}$, then every subset of $\mathbb{R}^n$ with density $\rho$ has 
lower $f$-energy at least $\rho\widehat{h}(0)-h(0)$.
\end{proposition}
\noindent
 Writing the inequality from the proposition as \begin{align*}
    \frac1{\rho}E_f(C)+\frac{h(0)}{\rho}\ge \widehat{h}(0)
\end{align*} motivates the following: 
\begin{proposition}
    Let $f:\mathbb{R} \to \mathbb{R}$ be an even continuous integrable function, with non-negative Fourier transform $\widehat{f}$. Then for each Borel measure $\mu$ with average mass $1$, so $\mu \in \mathcal{U}$ where $\mathcal{U}$ is as defined in \eqref{mathcalUsetdef}, $\mathcal{E}_f(\mu)\ge \widehat{f}(0)$.
    \label{lp}
\end{proposition}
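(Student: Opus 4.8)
\emph{The plan.} This is a linear-programming (positive-definiteness) bound for measures, and I would prove it by testing the hypothesis $\widehat f\ge 0$ against a carefully chosen signed measure of total mass zero. The starting point is the identity
\[
\iint_{\mathbb R^2} f(x-y)\,d\nu(x)\,d\nu(y)=\int_{\mathbb R}\widehat f(\xi)\,|\widehat\nu(\xi)|^2\,d\xi\ \ge\ 0,
\]
valid for every finite signed measure $\nu$ on $\mathbb R$: substitute the Fourier inversion formula $f(t)=\int_{\mathbb R}\widehat f(\xi)e^{2\pi i\xi t}\,d\xi$, apply Fubini, and use $\widehat f\ge 0$. (Here one uses that the potential $f$ is continuous, which together with $f\in L^1$ and $f,\widehat f\ge 0$ forces $\widehat f\in L^1$, so the inversion holds pointwise; for the potentials of interest, in particular $\phi$, this is automatic.)

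\emph{The main step.} Fix $r>0$, put $W=[-r,r]$ and $\rho_W:=\mu(W)/(2r)$ (note $\mu(W)<\infty$ since $\mu\in\mathcal U$), and apply the identity to $\nu:=\mu|_W-\rho_W\,(\mathrm{Leb}|_W)$, which is finite and has total mass $\mu(W)-\rho_W\cdot 2r=0$. Expanding $\iint f\,d\nu\,d\nu\ge 0$ and using that $f$ is even, with the shorthand $A_r:=\iint_{W^2} f(x-y)\,dx\,dy$ and $B_r:=\iint_{W^2} f(x-y)\,d\mu(x)\,dy$, one obtains
\[
\iint_{W^2} f(x-y)\,d\mu(x)\,d\mu(y)\ \ge\ 2\rho_W B_r-\rho_W^2 A_r .
\]
Now I would bound the two terms. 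For $A_r$ the crude estimate $A_r=\int_W\bigl(\int_{x-r}^{x+r} f(u)\,du\bigr)\,dx\le 2r\int_{\mathbb R} f=2r\,\widehat f(0)$ suffices. For $B_r=\int_W\bigl(\int_{x-r}^{x+r} f(u)\,du\bigr)\,d\mu(x)$, observe that, given $\varepsilon>0$, choosing $C$ with $\int_{-C}^{C} f\ge\widehat f(0)-\varepsilon$ makes the inner integral at least $\widehat f(0)-\varepsilon$ whenever $x\in[-r+C,\,r-C]$, while the $\mu$-mass of the two boundary strips $[-r,\,-r+C]\cup[r-C,\,r]$ is $o(\mu(W))$ because $\mu\in\mathcal U$; letting $\varepsilon\to 0$ gives $\liminf_{r\to\infty} B_r/\mu(W)\ge\widehat f(0)$. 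Since $2r\rho_W=\mu(W)$ we have $\rho_W^2 A_r/\mu(W)\le\rho_W\,\widehat f(0)$, so dividing the displayed inequality by $\mu(W)$ yields
\[
\frac{1}{\mu(W)}\iint_{W^2} f(x-y)\,d\mu(x)\,d\mu(y)\ \ge\ 2\rho_W\,\frac{B_r}{\mu(W)}-\rho_W\,\widehat f(0),
\]
and, since $\rho_W\to 1$ and $B_r\ge 0$, taking $\liminf_{r\to\infty}$ gives $\mathcal E_f(\mu)\ge 2\widehat f(0)-\widehat f(0)=\widehat f(0)$.

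\emph{Where the difficulty lies.} Most of this is routine: the positive-definiteness identity, the choice of mean-zero test measure, the trivial bound on $A_r$, and the elementary facts about membership in $\mathcal U$ (finiteness of $\mu([-r,r])$, $\rho_W\to 1$, and negligibility of the boundary mass $\mu([r-C,r])+\mu([-r,-r+C])=o(\mu(W))$). The step that needs a moment of care is the lower bound $\liminf_{r\to\infty} B_r/\mu(W)\ge\widehat f(0)$: one has to see that $\int_W\bigl(\int_{x-r}^{x+r}f(u)\,du\bigr)\,d\mu(x)$, normalised by $\mu(W)$, is at least $\int_{\mathbb R} f=\widehat f(0)$ in the limit, which is precisely where the density hypothesis $\mu\in\mathcal U$ and the tail estimate $\int_{-C}^{C} f\to\int_{\mathbb R} f$ enter; morally it says Lebesgue measure is the ``most spread-out'' element of $\mathcal U$. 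A minor technical caveat is the continuity of $f$ needed to justify the pointwise Fourier inversion formula; this is automatic for the potentials considered here, and in general one can first reduce to that case by mollification.
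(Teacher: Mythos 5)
Your proof is correct and uses essentially the same argument as the paper: expand the positive-definite form $\iint f\,d\nu\,d\nu\ge 0$ for a signed test measure built from $\mu$ restricted to $[-r,r]$ minus a Lebesgue piece, drop the quadratic term, and show that the cross term (normalised) tends to $\widehat f(0)$ via the tail estimate $\int_{-C}^{C}f\to\widehat f(0)$ together with $\mu\in\mathcal U$. The only cosmetic difference is that you subtract $\rho_W\cdot\mathrm{Leb}|_W$ (so the test measure has total mass zero) whereas the paper subtracts unscaled Lebesgue measure; since $\rho_W\to 1$ this washes out in the limit. Your caveat about needing continuity of $f$ to invoke $\widehat f\in L^1$ and pointwise Fourier inversion is well taken and matches the paper's implicit use of Stein--Weiss Corollary I.26; in the application $f=\psi$ is continuous, so nothing is lost.
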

Observe this proposition is equivalent to the linear programming bounds (with $h=f$) if $\mu$ is the normalised counting measure on a multiset $C \subset \mathbb{R}$ with a well-defined density; to prove this proposition, we note the proof of the linear programming bounds given in Proposition 2.2 in \cite{gausscoremodel} works for a general measure $\mu \in \mathcal{U}$, not just when it is the normalised counting measure on a multiset. 

Once we have Proposition \ref{lp}, the next step is to construct a suitable auxiliary function $\psi_\alpha$ such that \begin{align}
&\psi_\alpha\le F_\alpha, \quad\text{ and }\label{ghost} \\
    &\mathcal{E}_{F_\alpha}\mathopen{}\left(\sum_{n \in \mathbb{Z}}\delta(n)\right)=\mathcal{E}_{\psi_\alpha}\mathopen{}\left(\sum_{n \in \mathbb{Z}}\delta\mathopen{}\left( n\right)\mathclose{}\right)=\widehat{\psi_\alpha}(0).
    \label{likehim}
\end{align} In this case, this would imply the measure $\sum_{n \in \mathbb{Z}}\delta\mathopen{}\left( n\right)\mathclose{}$ minimises continuous $F_\alpha$-energy: For any measure $\mu \in \mathcal{U}$, we have $\mathcal{E}_{\psi_\alpha}(\mu)\le \mathcal{E}_{F_\alpha}(\mu)$ since $\psi_\alpha\le F_\alpha$, and since $\widehat{\psi_\alpha}\ge 0$, this combined with Proposition \ref{lp} implies $\mathcal{E}_{F_\alpha}(\mu)\ge \widehat{\psi_\alpha}(0)$. Combining this with \eqref{likehim} tells us \begin{align*}
    \mathcal{E}_{F_\alpha}(\mu)\ge  \mathcal{E}_{\psi_\alpha}\mathopen{}\left(\sum_{n \in \mathbb{Z}}\delta\mathopen{}\left( n\right)\mathclose{}\right)=\mathcal{E}_{F_\alpha}\mathopen{}\left(\sum_{n \in \mathbb{Z}}\delta\mathopen{}\left( n\right)\mathclose{}\right)
\end{align*} for any $\mu \in \mathcal{U}$, which shows 
 the measure $\sum_{n \in \mathbb{Z}}\delta\mathopen{}\left( n\right)\mathclose{}$ minimises continuous $F_\alpha$-energy.
 
 So now our goal is to construct a continuous, integrable, positive-definite function $\psi_\alpha$ such that \eqref{ghost} and \eqref{likehim} hold. To do this,  we claim it is sufficient to construct $\psi_\alpha$ satisfying the following hypothesis:
\begin{restatable}{hypothesis}{hypothesisonpsialpha}
    Let $\psi_\alpha:\mathbb{R} \to \mathbb{R}$ be an even, continuous, integrable function satisfying the following five conditions: \begin{align}
    \psi_\alpha(x)& \le F_\alpha(x) \quad \text{for all }  x \in \mathbb{R}, \label{firstcondpsiog} \\
    \psi_\alpha(n)&=F_\alpha(n) \quad \text{for all }n\in \mathbb{N}, \label{seconcondpsiog} \\
    \widehat{\psi_\alpha} (\xi)&\ge 0 \quad \text{for all } \xi \in  \mathbb{R},  \label{thirdcondpsihatog} \\
    \widehat{\psi_\alpha}(\xi)&=0 \quad \text{if } |\xi|\ge 1, \label{fourthcondpsihatog}
\end{align} and \begin{align}
        \left|\psi_\alpha(x)\right|\le \frac{C}{1+|x|^2}\quad\text{ for all }x\in\mathbb{R},
        \label{sev}
    \end{align} for some constant $C>0$. Then the counting measure on $\mathbb{Z}$, $\sum_{n \in \mathbb{Z}}\delta(n)$, minimises continuous $F_\alpha$-energy, where $F_\alpha$ is as in \eqref{capFalpha}. This is equivalent to Theorem \ref{measure}.
    \label{thm:hypothesisonpsialpha}
\end{restatable}
\begin{proof}[Proof]
\eqref{thirdcondpsihatog} means $\psi_\alpha$ is positive-definite, and \eqref{firstcondpsiog} is saying $\psi_\alpha\le F_\alpha$, so what remains is to show \eqref{likehim}. Note \eqref{seconcondpsiog} implies \begin{align}
    \mathcal{E}_{F_\alpha}\mathopen{}\left(\sum_{n \in \mathbb{Z}}\delta(n)\right)\mathclose{}=\mathcal{E}_{\psi_\alpha}\mathopen{}\left(\sum_{n \in \mathbb{Z}}\delta(n)\right)\mathclose{}.
    \label{juliana}
\end{align} To show the common value of both sides above is $\widehat{\psi_\alpha}$, we apply the Poisson Summation Formula (PSF). As phrased in Lemma 1.11.5 in \cite{borodachov2019discrete}, it says that if $f:\mathbb{R} \to \mathbb{C}$ is a continuous function satisfying \begin{align}
        \left|f(x)\right|+\left|\widehat{f}(x)\right|\le \frac{C}{\left(1+|x|\right)^{1+\eps}}
        \label{hypothesispsf}
    \end{align} for some $\eps>0$, $C>0$, then  $\sum_{n \in \mathbb{Z}} f(n)=\sum_{n \in \mathbb{Z}} \widehat{f}(n)$. \eqref{fourthcondpsihatog} and \eqref{sev} combined imply  \begin{align*}
        \left|\widehat{\psi_\alpha}(x)\right|+\left|\psi_\alpha(x)\right|\le \frac{C'}{1+|x|^2},
    \end{align*} so we can apply PSF to get 
\begin{align*}
        \mathcal{E}_{\psi_\alpha}\mathopen{}\left(\sum_{n \in \mathbb{Z}}\delta(n)\right)\mathclose{}=\sum_{n \in \mathbb{Z}} \psi_\alpha(n)=\sum_{n \in \mathbb{Z}} \widehat{\psi_\alpha}(n)=\widehat{\psi_\alpha}(0),
    \end{align*} since from \eqref{fourthcondpsihatog}, $\psi_\alpha$ vanishes outside of $[-1,1]$. This equality combined with \eqref{juliana} implies \eqref{likehim}, as desired.
    \end{proof}
It turns out that conditions \ref{firstcondpsiog}, \eqref{seconcondpsiog}, and \eqref{fourthcondpsihatog} can be used to uniquely determine $\psi_\alpha$, and we construct $\psi_\alpha$ in Section \ref{auxsec}. Based on this construction, we are then able to see that $\psi_\alpha$ satisfies \eqref{seconcondpsiog}, \eqref{fourthcondpsihatog} and \eqref{sev}. We then prove \eqref{firstcondpsiog} in Section \ref{chamere}, and prove \eqref{thirdcondpsihatog} in Section \ref{widehatpsi-alphapos}, which would complete the proof that $\psi_\alpha$ satisfies Hypothesis \ref{thm:hypothesisonpsialpha},  and consequently prove Theorem \ref{measure}.
\begin{remark}[A word on notation]
      Often we would want to show an inequality like $f(x)\le g(x)$, and we do this by showing something like $f(x)\le F(x)$, $g(x)\ge G(x)$ and showing $F(x)\le G(x)$. To differentiate between already established inequalities like $F(x)\le G(x)$, and inequalities we want to show, like $f(x)\le g(x)$, we will use the notation $f(x)\underset{\textup{wts}}{\le} g(x)$ to indicate this is an inequality we want to show (wts).
\end{remark}
Before constructing the auxiliary functions, we first obtain useful estimates on $s_\alpha$ and $F_\alpha$, which we use in proving the desired inequalities for large values of $\alpha$.

\section{Estimates on \texorpdfstring{$s_\alpha$}{Lg} and \texorpdfstring{$F_\alpha$}{Lg}}
Observe that if $\sum_{n \in \mathbb{Z}}\delta(n)$ minimises continuous $F_\alpha$-energy, then from the formula for continuous energy in \eqref{energyoflatticeocntinuous},  \begin{equation}
    \sum_{n \in \mathbb{Z}} tF_\alpha\left(tn\right)\ge \sum_{n \in \mathbb{Z}} F_\alpha(n) \label{lomo}
\end{equation} for all $t>0$, and by differentiating the left-hand side of \eqref{lomo} and setting it equal to $0$ at $t=1$,  we get \begin{align}
    \sum_{n \in \mathbb{Z}} \left(F_\alpha(n)+nF_\alpha'(n)\right)=0. \label{accfirstordercond}
\end{align} This identity will play a rule in proving the inequalities for $\psi_\alpha$ listed in Hypothesis \ref{thm:hypothesisonpsialpha}. The first estimate is as follows.
\begin{lemma}
    Let $\beta>1$, and $k> 0$. Then \begin{align*}
        \sum_{\substack{n \in \mathbb{N}, n\ge k}}^\infty \frac1{n^\beta}\le \frac1{k^\beta}\left(\frac{\beta+k-1}{\beta-1}\right).
    \end{align*}
    \label{estimateforsumoverpowers}
\end{lemma}
This follows from upper bounding the Riemann sum on the left-hand side by the corresponding integral.
        We now will list some propositions characterising $s_\alpha$. The first shows that  $s_4=\sqrt{2}$:
\begin{restatable}{proposition}{s4issqrt2prop}
    $s_4=\sqrt{2}$, that is, the function $t \mapsto \sum_{n \in \mathbb{Z}} t\left(1+t^4n^4\right)^{-1}$ for $t>0$ is minimised at $t=\sqrt{2}$.
    \label{thm:s4issqrt2prop}
\end{restatable}
    \begin{proof}[Proof of Proposition \ref{thm:s4issqrt2prop}]
    From equation $9$ in \cite{nijboer1985minimum}, we have \begin{align*}
         \sum_{n=1}^\infty f_4(tn)=\sum_{n=1}^\infty \frac1{1+(tn)^4}=\frac{\pi \sqrt{2}}{4t}\cdot \frac{\sinh\mathopen{}\left(\frac{\pi \sqrt{2}}{t}\right)\mathclose{}+\sin \mathopen{}\left(\frac{\pi \sqrt{2}}{t}\right)\mathclose{}}{\cosh\mathopen{}\left(\frac{\pi \sqrt{2}}{t}\right)\mathclose{}-\cos\mathopen{}\left(\frac{\pi \sqrt{2}}{t}\right)\mathclose{}}-\frac1{2},
     \end{align*} so \begin{equation}
        \begin{split}
             \mathcal{E}_{f_4}(\omega(t\mathbb{Z}))&=\sum_{n=-\infty}^\infty tf_4(tn)=t+2t\sum_{n=1}^\infty f_4(tn) \\
             &=\frac{\pi}{\sqrt{2}}\cdot \frac{\sinh\mathopen{}\left(\frac{\pi \sqrt{2}}{t}\right)\mathclose{}+\sin \mathopen{}\left(\frac{\pi \sqrt{2}}{t}\right)\mathclose{}}{\cosh\mathopen{}\left(\frac{\pi \sqrt{2}}{t}\right)\mathclose{}-\cos\mathopen{}\left(\frac{\pi \sqrt{2}}{t}\right)\mathclose{}}.
        \end{split}
         \label{party}
     \end{equation} Differentiating the right-hand side with respect to $x=\frac{\pi\sqrt{2}}{t}$ and setting it equal to $0$ implies \begin{align*}
         \left(\cosh x+\cos x\right)\left(\cosh\left(x\right)-\cos x\right)-\left(\sinh\left(x\right)+\sin x\right)^{2}=0,
     \end{align*} which can be simplified to obtain $(\sinh x)( \sin x)=0$. This implies $x=\pi m$ for $m \in \mathbb{N}$, so $t=\frac{\sqrt{2}}{m}$. Plugging this into \eqref{party} and simplifying, we get \begin{align*}
         \mathcal{E}_{f_4}\mathopen{}\left(\frac{\sqrt{2}}{m}\mathbb{Z}\right)\mathclose{}&=1+\frac{2\left(-1\right)^{m}}{e^{\pi m}-\left(-1\right)^{m}},
     \end{align*} which we can see is minimised when $m=1$, so $t=\sqrt{2}$.  
     This shows $s_4=\sqrt{2}$ as desired.
\end{proof}

The next Proposition states that $s_\alpha>1$ for $\alpha\ge 6$:
\begin{restatable}{proposition}{minimiserprop}
    Let $\alpha\ge 6$ be an even integer, and let $s_\alpha$ minimise $t \mapsto \mathcal{E}_{f_\alpha}(\omega(t\mathbb{Z}))$. Then $s_\alpha> 1$.
    \label{thm:minimiserprop}
\end{restatable}
This can be verified with interval arithmetic \cite{IntervalArithmetic.jl} in Julia \cite{bezanson2017julia} when $6\le \alpha\le 14$, carried out in the following Github repository \cite{edwin2024}. We prove it for $\alpha\ge 16$. 
\begin{proof}[Proof of Proposition \ref{thm:minimiserprop} for $\alpha\ge 16$]
     Recall $\mathcal{E}_{f_\alpha}(\omega(t\mathbb{Z}))$ is given by \begin{equation*}
    \mathcal{E}_{f_\alpha}\mathopen{}\left(\omega(t\mathbb{Z})\right)\mathclose{}=\sum_{n\in \mathbb{Z}} tf_\alpha(tn).
\end{equation*} Suppose for the sake of contradiction that $s_\alpha\le 1$. We get a lower bound for $\mathcal{E}_f\mathopen{}\left(\omega(s_\alpha \mathbb{Z})\right)\mathclose{}$ in this case. If $\frac{3}{4}\le s_\alpha\le 1$, then $f_\alpha\mathopen{}\left(s_\alpha\right)\mathclose{}\ge \frac1{2}$ since $f_\alpha(x)$ is decreasing in $|x|$, so 
     \begin{align*}
\mathcal{E}_{f_\alpha}\mathopen{}\left(\omega(s_\alpha \mathbb{Z})\right)\mathclose{}&=\sum_{n\in\mathbb{Z}} s_\alpha f\mathopen{}\left(ns_\alpha\right)\mathclose{}\ge \sum_{n=-1}^1 s_\alpha f\mathopen{}\left(ns_\alpha\right)\mathclose{}=s_\alpha +2s_\alpha f\left(s_\alpha\right) \\
    &\ge \frac3{4}+2\cdot \frac3{4}\cdot \frac1{2}=\frac3{2}.
\end{align*}
If $\frac1{2}\le s_\alpha\le \frac3{4}$, then \begin{align*}
    \mathcal{E}_{f_\alpha}\mathopen{}\left(\omega(s_\alpha\mathbb{Z})\right)\mathclose{}&\ge \sum_{n=-1}^1 s_\alpha f\mathopen{}\left(ns_\alpha\right)\mathclose{} =s_\alpha+ 2s_\alpha f\left(s_\alpha\right)\ge \frac1{2}+2\cdot \frac1{2}\cdot f_\alpha\left(\frac3{4}\right) \\
    &=\frac1{2}+\frac1{1+\left(\frac{3}{4}\right)^\alpha},
\end{align*} so putting these two results together, if $\frac1{2}<s_\alpha\le 1$, then \begin{equation}
\mathcal{E}_{f_\alpha}\mathopen{}\left(\omega(s_\alpha\mathbb{Z})\right)\mathclose{}\ge\frac1{2}+\frac1{1+\left(\frac{3}{4}\right)^\alpha}.
    \label{lowboundefalpha}
\end{equation}
In general, if $\frac{1}{k}<s_\alpha\le \frac{1}{k-1}$ for $k\ge 3$ an integer, then since $f_\alpha(x)$ is decreasing in $|x|$, \begin{align*}
    \mathcal{E}_{f_\alpha}\mathopen{}\left(\omega(s_\alpha \mathbb{Z})\right)\mathclose{}&\ge s_\alpha +2s_\alpha \sum_{n=1}^{k-1}f_\alpha\mathopen{}\left(ns_\alpha\right)\mathclose{}\ge \frac1{k}+\frac2{k}\sum_{n=1}^{k-1}f_\alpha\mathopen{}\left(\frac{n}{k-1}\right)\mathclose{}.
\end{align*} If $k=3$, this reduces to \begin{align}
    \mathcal{E}_{f_\alpha}\mathopen{}\left(\omega(s_\alpha\mathbb{Z})\right)\mathclose{}\ge \frac2{3}+\frac2{3}f_\alpha\mathopen{}\left(\frac1{2}\right), \quad \text{ if }\frac1{3}<s_\alpha\le \frac1{2}.
    \label{monsoon}
\end{align} If $k\ge 4$, we have
\begin{align*}
    \mathcal{E}_{f_\alpha}\mathopen{}\left(\omega(s_\alpha \mathbb{Z})\right)\mathclose{}&\ge \frac1{k}+\frac2{k}\sum_{n=1}^{p}f_\alpha\mathopen{}\left(\frac{n}{k-1}\right)\mathclose{}+\frac2{k}\sum_{n=p+1}^{k-1}f_\alpha\mathopen{}\left(\frac{n}{k-1}\right)\mathclose{} \\
    &\ge \frac1{k}+\frac{2p}{k}f_\alpha\mathopen{}\left(\frac{p}{k-1}\right)\mathclose{}+\frac{k-p-1}{k},
\end{align*} so \begin{align}
    \mathcal{E}_{f_\alpha}\mathopen{}\left(\omega(s_\alpha \mathbb{Z})\right)\mathclose{}\ge \frac{k-p}{k}+\frac{2p}{k}f_\alpha\left(\frac{p}{k-1}\right)
    \label{tra}
\end{align}
for any $1\le p\le k-1$. We then choose $p=\left\lceil \frac{k}{2}\right\rceil$,$\left\lfloor \frac{k}{2}\right\rfloor$ in \eqref{tra} and average the two resulting inequalities to get \begin{align*}
    \mathcal{E}_{f_\alpha}\left(\omega(s_\alpha \mathbb{Z})\right)\ge \frac1{2}+f_\alpha\mathopen{}\left(\frac{\left\lceil \frac{k}{2}\right\rceil}{k-1}\right)\mathclose{}\ge \frac1{2}+\frac1{1+\left(\frac3{4}\right)^\alpha}\ge \frac2{3}\left(1+\frac1{1+2^{-\alpha}}\right),
\end{align*} if $k\ge 4$ and $\alpha\ge 16$. Combining this with  \eqref{lowboundefalpha},  we see that in general,
\begin{equation}
    \mathcal{E}_{f_\alpha}\mathopen{}\left(\omega(s_\alpha\mathbb{Z})\right)\mathclose{}\ge \frac2{3}\left(1+\frac1{1+2^{-\alpha}}\right)\quad\text{ if  }s_\alpha<1.
    \label{bound2}
\end{equation}
Now let $\widehat{s}=(2\alpha)^{\frac1{\alpha}}$.
Applying Lemma \ref{estimateforsumoverpowers},  \begin{align*}
   \mathcal{E}_{f_\alpha}\mathopen{}\left(\omega\mathopen{}\left(\widehat{s}\mathbb{Z}\right)\mathclose{}\right)\mathclose{}&=\sum_{n\in \mathbb{Z}}\frac{\widehat{s}}{1+n^\alpha \widehat{s}^\alpha}\le \widehat{s}+2\widehat{s}^{1-\alpha}+2\widehat{s}^{1-\alpha}\sum_{n=2}^\infty \frac1{n^\alpha} \\
   &\le \widehat{s}\left(1+\frac2{\widehat{s}^\alpha}+\frac2{2^\alpha \widehat{s}^\alpha}\left(\frac{\alpha+1}{\alpha-1}\right)\right) \\
   &=\left(2\alpha\right)^\frac1{\alpha}\mathopen{}\left(1+\frac1{\alpha}+\frac1{\alpha 2^\alpha}\left(\frac{\alpha+1}{\alpha-1}\right)\right)\mathclose{},
\end{align*} and combining this with \eqref{bound2}, we get \begin{equation*}
   \frac2{3}\left(1+\frac1{1+2^{-\alpha}}\right)\le \left(2\alpha\right)^{\frac1{\alpha}}\left(\frac{\alpha+1}{\alpha}\right)\left(1+\frac1{2^\alpha(\alpha-1)}\right),
\end{equation*} since $\mathcal{E}_{f_\alpha}\mathopen{}\left(\omega(s_\alpha \mathbb{Z})\right)\mathclose{}\le \mathcal{E}_{f_\alpha}\mathopen{}\left(\omega(s\mathbb{Z})\right)\mathclose{}$ for any $s>0$. The above inequality fails if $\alpha\ge 16$, so then we see that if $\alpha\ge 16$, then $s_\alpha>1$, as desired.
\end{proof}
Recall we defined $F_\alpha(x)$ by $F_\alpha(x)=\left(1+s_\alpha^\alpha x^\alpha\right)^{-1}$. From this we get the following expressions for $F_\alpha'$, $F_\alpha''$ in terms of $F_\alpha$:
\begin{align*}
    F_\alpha'(x)=-\frac{\alpha s_\alpha^\alpha x^{\alpha-1} }{\left(1+s_\alpha^\alpha x^\alpha\right)^2}=-\frac{\alpha }{x}\cdot \frac{s_\alpha ^\alpha x^\alpha+1-1}{\left(1+s_\alpha ^\alpha x^\alpha\right)^2}=-\frac{\alpha}{x}\left(F_\alpha(x)-F_\alpha(x)^2\right),
\end{align*}
so \begin{align}
    F_\alpha'(x)=-\frac{\alpha F_\alpha(x)\left(1-F_\alpha (x)\right)}{x}.
    \label{emu}
\end{align} Differentiating this yields \begin{align*}
    F_\alpha''(x)=-\frac{\alpha F_\alpha'(x)\left(1-F_\alpha(x)\right)}{x}+\frac{\alpha F_\alpha (x)}{x}\left(\frac{1-F_\alpha(x)}{x}+F_\alpha'(x)\right),
\end{align*}
and 
  substituting the expression for $F_\alpha'(x)$ in \eqref{emu}, and simplifying, we get \begin{align*}
  F_\alpha''(x)&=\alpha F_\alpha(x)\left(\alpha\left(1-F_\alpha(x)\right)+1-\alpha F_\alpha(x)\right)\left(\frac{1-F_\alpha(x)}{x^2}\right),
  \end{align*} or \begin{align}
      F_\alpha''(x)&=\alpha F_\alpha(x)\left(\frac{1-F_\alpha(x)}{x^2}\right)\left(\alpha\left(1-2F_\alpha(x)\right)+1\right).
      \label{Falphaduobleprime}
  \end{align}
Our third proposition gives an asymptotic expression for $s_\alpha$:
\begin{restatable}{proposition}{asympsalpha}
    For $\alpha\ge 12$, we can write \begin{equation}
    s_\alpha^\alpha=\alpha-2+\sqrt{\left(\alpha-2\right)^2-3}+\mathfrak{G}(\alpha),
    \label{ssubalpjha^alphaeq}
\end{equation} where \begin{equation}
    \left|\mathfrak{G}(\alpha)\right|\le \frac{2\alpha}{0.99^2}\cdot \frac{\left(\alpha+1\right)^2}{2^{\alpha-1}\left(\alpha-1\right)}.
    \label{jermaine} 
\end{equation} This bound implies $2\alpha-5 \le s_\alpha^\alpha\le 2\alpha-3$,
 \begin{align}
\frac1{2\alpha-2}\le F_\alpha(1)\le \frac1{2\alpha-4},
    \label{boundonFalphaof1}
\end{align}
and \begin{align}
   \frac{1}{2}\left(1-\frac1{2\alpha-4}\right)\le -F_\alpha'(1)\le \frac{\alpha}{2\alpha-4}.
   \label{boundFalpha'}
\end{align}
    \label{thm:asympsalpha}
\end{restatable}
\begin{proof}
    Using Proposition \ref{thm:minimiserprop}, for $\alpha\ge 12$, we are going to choose $t >1$ to minimise  \begin{align*}
E_\alpha(t):=\mathcal{E}_{f_\alpha}\mathopen{}\left(t\sum_{n \in \mathbb{Z}}\delta(tn)\right)\mathclose{}.
    \end{align*} Let \begin{align}
    \mathfrak{F}_\alpha(t)&=t+2tf_\alpha(t), \label{mathfrakfdef} \\
    \mathfrak{E}_\alpha(t)&=2\sum_{n= 2}^\infty tf_\alpha(nt), \label{mathfrakede}
\end{align} so that $E_\alpha(t)=\mathfrak{F}_\alpha(t)+\mathfrak{E}_\alpha(t)$. Since $s_\alpha$ minimises $E_\alpha(t)$, $E_\alpha'\mathopen{}\left(s_\alpha\right)\mathclose{}=0$, or
 \begin{equation}
\mathfrak{F}_\alpha'\mathopen{}\left(s_\alpha\right)\mathclose{}=-\mathfrak{E}_\alpha'\mathopen{}\left(s_\alpha\right)\mathclose{}.
    \label{condonsalpha}
\end{equation} We start by bounding $\mathfrak{E}_\alpha'(t)$ for $t>1$. From \eqref{emu} (that formula for the derivative is independent of $s_\alpha$), we may deduce \begin{equation}
\begin{split}
     \left|\mathfrak{E}_\alpha'\left(t\right)\right|&\le 2\sum_{n=2}^\infty \left|f_\alpha(nt)+tnf_\alpha'\left(nt\right)\right|\le 2\sum_{n=2}^\infty (\alpha+1)f_\alpha(nt) \\
     &\le \sum_{n=2}^\infty \frac{2\alpha+2}{n^\alpha}\le  \frac{\left(\alpha+1\right)^2}{2^{\alpha-1}\left(\alpha-1\right)},
\end{split}
   \label{boundonmathfrakEderivative}
\end{equation}  from Lemma \ref{estimateforsumoverpowers}.
We now differentiate $\mathfrak{F}_\alpha(t)$. From \eqref{mathfrakfdef}, \begin{equation*}
    \mathfrak{F}_\alpha'(t)=1+2\partial_t\mathopen{}\left(\frac{t}{1+t^\alpha}\right)\mathclose{}=1+2\left(\frac{1+t^\alpha-\alpha t^{\alpha}}{\left(1+t^\alpha\right)^2}\right)=1+2\left(\frac{1-\left(\alpha-1\right)t^\alpha}{\left(1+t^\alpha\right)^2}\right).
\end{equation*} Setting this equal to $-\mathfrak{E}_\alpha'(t)$ and solving, we get \begin{equation*}
    2\left(\frac{1-\left(\alpha-1\right)t^\alpha}{\left(1+t^\alpha\right)^2}\right)=-1-\mathfrak{E}_\alpha'(t),
\end{equation*} so \begin{equation*}
    2-2\left(\alpha-1\right)t^\alpha=-\left(1+\mathfrak{E}_\alpha'(t)\right)\left(1+t^\alpha\right)^2,
\end{equation*} \begin{equation*}
    2-2\left(\alpha-1\right)t^\alpha+\left(1+\mathfrak{E}_\alpha'(t)\right)t^{2\alpha}+2\left(1+\mathfrak{E}_\alpha'(t)\right)t^\alpha+\left(1+\mathfrak{E}_\alpha'(t)\right)=0,
\end{equation*} so we get the equation \begin{equation*}
   \left(1+\mathfrak{E}_\alpha'(t)\right) t^{2\alpha}+\left(4+2\mathfrak{E}_\alpha'(t)-2\alpha\right)t^\alpha+3+\mathfrak{E}_\alpha'(t)=0.
\end{equation*} $t=s_\alpha$ satisfies this equation, so solving for $s_\alpha^\alpha$, we get \begin{equation*}
    s_\alpha^\alpha=\frac{\alpha-2-\mathfrak{E}_\alpha'\left(s_\alpha\right)\pm \sqrt{\left(\alpha-2-\mathfrak{E}_\alpha'\left(s_\alpha\right)\right)^2-\left(1+\mathfrak{E}_\alpha'\left(s_\alpha\right)\right)\left(3+\mathfrak{E}_\alpha'\left(s_\alpha\right)\right)}}{1+\mathfrak{E}_\alpha'\left(s_\alpha\right)}. 
\end{equation*} Let \begin{equation}
    H_\alpha^{\pm}\left(x\right)=\frac{\alpha-2-x\pm \sqrt{\left(\alpha-2-x\right)^{2}-\left(1+x\right)\left(3+x\right)}}{1+x},
    \label{Hsubalphadef}
\end{equation} respectively so $s_\alpha^\alpha$ is one of $H_\alpha^+\left(\mathfrak{E}_\alpha'\left(s_\alpha\right)\right)$, $H_\alpha^{-}\left(\mathfrak{E}_\alpha'\left(s_\alpha\right)\right)$.   We can check that if $\alpha\ge 12$, \eqref{boundonmathfrakEderivative} implies $\left|\mathfrak{E}_\alpha'\left(s_\alpha\right)\right|\le 0.01$. In that case, we note if $|x| \le 0.01$, then \begin{align*}
    H_\alpha^{-}(x)&= \frac{3+x}{\alpha-2-x+\sqrt{\left(\alpha-2-x\right)^2-(1+x)(3+x)}} \\
    &\le \frac{3.01}{\alpha-2.01+\sqrt{\left(\alpha-2.01\right)^2-1.01\cdot 3.01}}<1.
\end{align*} Proposition \ref{thm:minimiserprop} says $s_\alpha> 1$, so this tells us for $\alpha\ge 12$, $s_\alpha^\alpha=H^{+}_{\alpha}\left(\mathfrak{E}_\alpha'\left(s_\alpha)\right)\right)$. We now bound the derivative of $H_\alpha^+$ for $|x|\le 0.01$. Note  \begin{align*}
    \left(\alpha-2-x\right)^{2}-\left(1+x\right)\left(3+x\right)&=\left(\alpha-2\right)^{2}-2\left(\alpha-2\right)x+x^{2}-x^{2}-4x-3 \\
    &=\left(\alpha-2\right)^{2}-2\alpha x-3,
\end{align*} so \begin{equation*}
    H_\alpha^+(x)=\frac{\sqrt{\left(\alpha-2\right)^{2}-2\alpha x-3}}{1+x}+\frac{\alpha-1}{1+x}-1.
\end{equation*}
Then \begin{align*}
    \partial_x H_\alpha^+(x)&=\frac{\frac{-\alpha\left(1+x\right)}{\sqrt{\left(\alpha-2\right)^2-2\alpha x-3}}-\sqrt{\left(\alpha-2\right)^2-2\alpha x-3}}{\left(1+x\right)^2}-\frac{\alpha-1}{\left(1+x\right)^2}, 
\end{align*} and if $|x|\le 0.01$, then \begin{align*}
    \left|\partial_x H_\alpha^+(x)\right|\le \frac1{0.99^2}\left(\alpha-1+\alpha-2+\frac{1.01\alpha}{\sqrt{(\alpha-2)^2-0.02\alpha-3}}\right)\le \frac{2\alpha}{0.99^2},
\end{align*} if $\alpha\ge 12$. So then \begin{equation*}
    \left|H_\alpha^+(x)-H_\alpha^+(0)\right|\le \frac{2\alpha}{0.99^2}|x|,
\end{equation*} and since $s_\alpha^\alpha=H_\alpha^+\mathopen{}\left(\mathfrak{E}_\alpha'\left(s_\alpha\right)\right)\mathclose{}$, we get \begin{equation*}
    \left|s_\alpha^\alpha-H_\alpha^+(0)\right|\le  \frac{2\alpha}{0.99^2}\left|\mathfrak{E}_\alpha'\left(s_\alpha\right)\right|.
\end{equation*} From \eqref{boundonmathfrakEderivative}, \begin{equation*}
    \left|\mathfrak{E}_\alpha'(t)\right|\le \frac{\left(\alpha+1\right)^2}{2^{\alpha-1}\left(\alpha-1\right)} ,
\end{equation*} if $t\ge 1$, and so at $t=s_\alpha$ \begin{equation}
    \left|s_\alpha^\alpha-H_\alpha^+(0)\right|\le \frac{2\alpha}{0.99^2}\cdot \frac{\left(\alpha+1\right)^2}{2^{\alpha-1}\left(\alpha-1\right)}.
    \label{paradis}
\end{equation} From \eqref{Hsubalphadef}, $ H_\alpha^+(0)=\alpha-2+\sqrt{\left(\alpha-2\right)^2-3}$, so this proves the first part of the proposition, \eqref{ssubalpjha^alphaeq}.  We use this to get estimates of $F_\alpha(1)$ and $F_\alpha'(1)$. For $\alpha\ge 12$, the bound in \eqref{paradis} implies \begin{align*}
    2\alpha-5 \le s_\alpha^\alpha\le 2\alpha-3.
\end{align*} Remember
$F_\alpha(x)=\left(1+s_\alpha^\alpha x^\alpha\right)^{-1}$, so $F_\alpha(1)=\left(1+s_\alpha^\alpha\right)^{-1}$. This immediately tells us that \begin{align*}
\frac1{2\alpha-2}\le F_\alpha(1)\le \frac1{2\alpha-4}.
\end{align*} From \eqref{emu}, $xF_\alpha'(x)=-\alpha F_\alpha(x)\left(1-F_\alpha(x)\right)$.
Plugging in $x=1$, $ -F_\alpha'(1)=\alpha F_\alpha(1)\left(1-F_\alpha(1)\right)$, so \begin{align*}
   \frac{1}{2}\left(1-\frac1{2\alpha-4}\right)\le -F_\alpha'(1)\le \frac{\alpha}{2\alpha-4}.
\end{align*} 
\end{proof}

\section{Constructing the auxiliary functions}\label{auxsec} 
With these facts, we can begin constructing the auxiliary functions satisfying Hypothesis \ref{thm:hypothesisonpsialpha}.
Recall we reduced the task of proving Theorem \ref{measure} to constructing an auxiliary even function $\psi_\alpha:\mathbb{R} \to \mathbb{R}$ satisfying Hypothesis \ref{thm:hypothesisonpsialpha}, restated below
\hypothesisonpsialpha*
 We use Theorem $9$ in \cite{bams/1183552525}, which allows us to construct a function $\psi_\alpha$ satisfying \eqref{fourthcondpsihatog} from its values and derivatives at $\mathbb{Z}$ (under certain conditions on $\psi_\alpha$), by  \begin{equation*}
    \psi_\alpha(x)=\frac{\sin^2\left(\pi x\right)}{\pi^2}\left(\sum_{n=-\infty}^\infty\frac{\psi_\alpha(n)}{\left(x-n\right)^2}+\sum_{n=-\infty}^\infty \frac{\psi_\alpha'(n)}{x-n}\right),
\end{equation*} and its Fourier transform (in $L^2$ sense) is supported in $[-1,1]$, given by for $|\xi|\le 1$, \begin{equation*}
\widehat{\psi_\alpha}(\xi)=\left(1-|\xi|\right)\left(\sum_{n=-\infty}^\infty \psi_\alpha(n)e^{-2\pi i n\xi}\right)+\frac{\sgn(\xi)}{2\pi i }\sum_{n=-\infty}^\infty \psi_\alpha'(n)e^{-2\pi i n\xi}.
\end{equation*} Equations \eqref{firstcondpsiog} and \eqref{seconcondpsiog} imply $\psi_\alpha(n)=F_\alpha(n)$ and $\psi_\alpha'(n)=F_\alpha'(n)$ for all $n \in \mathbb{Z}$, so we set \begin{equation}
\psi_\alpha(x)=\frac{\sin^2\left(\pi x\right)}{\pi^2}\left(\sum_{n=-\infty}^\infty\frac{F_\alpha(n)}{\left(x-n\right)^2}+\sum_{n=-\infty}^\infty \frac{F_\alpha'(n)}{x-n}\right).
    \label{psigensecconstr}
\end{equation} Taking advantage of the fact that $F_\alpha$ is even, we can write this as
\begin{equation}
\begin{split}
    \psi_\alpha(x)&= \sum_{n=-\infty}^\infty F_\alpha(n)\cdot \frac{\sin^2\left(\pi(x-n)\right)}{\pi^2\left(x-n\right)^2} \\
    &+\sum_{n=-\infty}^\infty nF_\alpha'(n)\cdot \frac{\sin\left(\pi(x-n)\right)}{\pi(x-n)}\cdot \frac{\sin\left(\pi(x+n)\right)}{\pi(x+n)} ,
\end{split}
\label{hauptul}
\end{equation}
from which we can see that the series converges in $L^p$, $p\ge 1$. Consequently, we can take the Fourier transform termwise, which leads to the following formula for $\widehat{\psi_\alpha}$: $\widehat{\psi_\alpha}(\xi)=0$ if $|\xi|>1$, and if $|\xi|\le 1$,  
\begin{equation}
\widehat{\psi_\alpha}(\xi)=\left(1-|\xi|\right)\left(\sum_{n=-\infty}^\infty F_\alpha(n)e^{-2\pi i n\xi}\right)+\frac{\sgn(\xi)}{2\pi i }\sum_{n=-\infty}^\infty F_\alpha'(n)e^{-2\pi i n\xi}.
    \label{widehatpsiin-1,1}
\end{equation}
In particular, $\psi_\alpha$ satisfies \eqref{fourthcondpsihatog}, and the expression in \eqref{psigensecconstr} implies $\psi_\alpha$ satisfies \eqref{seconcondpsiog}. We now use  the expression in \eqref{psigensecconstr} to show \eqref{sev}:
\begin{proof}[Proof of \eqref{sev}]
    Multiplying both sides of \eqref{hauptul} by $x^2$, we have \begin{align*}
        x^2\psi_\alpha(x)&=\sum_{n=-\infty}^\infty F_\alpha(n)\left(\frac{x\sin\left(\pi (x-n)\right)}{\pi\left(x-n\right)}\right)^2+\sum_{n=-\infty}^\infty nF_\alpha'(n)\frac{x^2\sin^2\left(\pi x\right)}{\pi^2\left(x^2-n^2\right)}.
    \end{align*} Note \begin{align*}
        \left|\frac{x\sin\left(\pi (x-n)\right)}{\pi\left(x-n\right)}\right|=\left|\frac{\sin\left(\pi(x-n)\right)}{\pi}+n\frac{\sin\left(\pi(x-n)\right)}{\pi(x-n)}\right|\le \frac1{\pi}+|n|,
    \end{align*} and \begin{align*}
        \left|\frac{x^2\sin^2\left(\pi x\right)}{\pi^2\left(x^2-n^2\right)}\right|&=\left|\frac{\sin^2\left(\pi x\right)}{\pi^2}+n^2\frac{\sin\left(\pi(x-n)\right)}{\pi(x-n)}\cdot \frac{\sin\left(\pi(x+n)\right)}{\pi(x+n)}\right|\le \frac1{\pi^2}+n^2,
    \end{align*} so \begin{align*}
        \left|x^2\psi_\alpha(x)\right|\le \sum_{n \in \mathbb{Z}}F_\alpha(n)\left(\frac1{\pi}+|n|\right)^2+\sum_{n \in \mathbb{Z}}n\left|F_\alpha'(n)\right|\left(\frac1{\pi^2}+n^2\right).
    \end{align*} Since $\alpha\ge 4$, and $F_\alpha(x)=\frac1{s_\alpha^\alpha x^\alpha+1}$ for some $s_\alpha>0$, this implies the series above is finite. This completes the proof.
\end{proof}
So far, we have shown $\psi_\alpha$ satisfies \eqref{seconcondpsiog},\eqref{fourthcondpsihatog} and \eqref{sev}. What remains is to show the inequalities in \eqref{firstcondpsiog}, \eqref{thirdcondpsihatog}, so showing $\widehat{\psi_\alpha}(\xi)\ge 0$, and $\psi_\alpha(x)\le F_\alpha(x)$. We do these in sections \ref{widehatpsi-alphapos} and \ref{chamere} respectively.

\section{The non-negativity of \texorpdfstring{$\widehat{\psi_\alpha}$}{Lg}.}\label{widehatpsi-alphapos}
Recall $\widehat{\psi_\alpha}$ given by \eqref{widehatpsiin-1,1}, \begin{equation*}
    \widehat{\psi_\alpha}(\xi)=\left(1-|\xi|\right)\left(\sum_{n=-\infty}^\infty F_\alpha(n)e^{-2\pi i n\xi}\right)+\frac{\sgn(\xi)}{2\pi i }\sum_{n=-\infty}^\infty F_\alpha'(n)e^{-2\pi i n\xi},
\end{equation*} if $|\xi|\le 1$, and $0$ otherwise. We already know that $\widehat{\psi_\alpha}$ is even since $\psi_\alpha$ is even and real-valued, so we will restrict our attention to $\xi \in [0,1]$. We start with when $0\le \xi\le \frac1{2}$:
\subsection{Proof that \texorpdfstring{$\widehat{\psi_\alpha}(\xi)\ge 0$ when $\xi \in \left[0,\frac1{2}\right]$}{Lg}}
We symmetrise the above expression for $\widehat{\psi_\alpha}$ to get \begin{equation}
    \widehat{\psi_\alpha}(\xi)=\left(1-\xi\right)\sum_{n=-\infty}^\infty F_\alpha(n)\cos\left(2\pi n\xi\right)-\frac1{2\pi}\sum_{n=-\infty}^\infty F_\alpha'(n)\sin\left(2\pi n \xi\right),
    \label{billieeilish}
\end{equation} so 
\begin{equation*}
    \widehat{\psi_\alpha}(\xi)\ge \left(1-\xi\right)\left(1-\sum_{n=1}^\infty 2F_\alpha(n)\right)-\frac1{\pi}F_\alpha'(1)\sin\left(2\pi \xi\right)-\frac1{\pi}\sum_{n=2}^\infty \left|F_\alpha'(n)\right|.
\end{equation*} Note that $F_\alpha'(1)<0$ and if $0\le \xi\le \frac1{2}$, $\sin\left(2\pi  \xi\right)\ge 0$, so then if $0\le \xi\le \frac1{2}$, then \begin{equation}
    \widehat{\psi_\alpha}(\xi)\ge \frac1{2}\left(1-\sum_{n=1}^\infty 2F_\alpha(n)\right)-\frac1{\pi}\sum_{n=2}^\infty \left|F_\alpha'(n)\right|.  \label{crookes2}
\end{equation} Call the constant on the right-hand side $\mathcal{T}(\alpha)$, so it is sufficient to show $\mathcal{T}(\alpha)\ge 0$ to prove $\widehat{\psi_\alpha}(\xi)\ge 0$ for $0\le \xi \le \frac1{2}$. When $4\le \alpha\le 10$, we verify this via interval arithmetic, the code for which can be found in the following Github repository \cite{edwin2024}. When $\alpha\ge 12$, we prove this directly:
\begin{proof}[Proof of $\mathcal{T}(\alpha)\ge 0$ for $\alpha\ge 12$.]
     From Proposition \ref{thm:asympsalpha}, we have $F_\alpha(1)\le \frac1{2\alpha-4}$, and $F_\alpha(n)\le \alpha^{-1}n^{-\alpha}$. From \eqref{emu}, we also have \begin{align*}
         \left|F_\alpha'(n)\right|\le \frac{\alpha F_\alpha(n)}{n}\le \frac1{n^{\alpha+1}},
     \end{align*}
     since as we said earlier, $F_\alpha(n)\le \alpha^{-1}n^{-\alpha}$, and so  \begin{align*}
    \mathcal{T}(\alpha)\ge \frac1{2}\left(1-\frac1{\alpha-2}-\sum_{n=2}^\infty\frac2{\alpha n^\alpha}\right)-\frac1{\pi}\sum_{n=2}^\infty \frac1{n^{1+\alpha}}.
\end{align*} Using Lemma \ref{estimateforsumoverpowers}, this implies \begin{align*}
    \mathcal{T}(\alpha)\ge \frac1{2}\left(1-\frac1{\alpha-2}-\frac2{\alpha}\cdot \frac{\alpha+1}{ 2^\alpha(\alpha-1)}\right)-\frac1{\pi}\cdot \frac{\alpha+2}{\alpha 2^{1+\alpha}},
\end{align*} which is positive if $\alpha\ge 12$. 
\end{proof}
We now move on to the case where $\xi \in \left[\frac1{2},1\right]$.
\subsection{Proof that \texorpdfstring{$\widehat{\psi_\alpha}(\xi)\ge 0$ when $\xi \in \left[\frac1{2},1\right]$}{Lg}}
 Write $\xi=1-t$ with $t \in \left[0,\frac1{2}\right]$ and substitute this into the expression for $\widehat{\psi_\alpha}$ in \eqref{billieeilish}, so that \begin{align*}
    \widehat{\psi_\alpha}(1-t)&=t\sum_{n=-\infty}^\infty F_\alpha(n)\cos\left(2\pi nt\right)+\frac1{2\pi}\sum_{n=-\infty}^\infty F_\alpha'(n)\sin\left(2\pi n t\right) \\
    &=t\left(\sum_{n=-\infty}^\infty F_\alpha(n)\cos\left(2\pi n t\right)+\sum_{n=-\infty}^\infty n F_\alpha'(n)\frac{\sin(2\pi n t)}{2\pi nt}\right).
\end{align*}  From \eqref{accfirstordercond}, $  \sum_{n \in \mathbb{Z}} \left(F_\alpha(n)+nF_\alpha'(n)\right)=0$, so subtracting this sum from the right-hand side of the above equation, we get \begin{align*}
    \widehat{\psi_\alpha}(1-t)&=t\sum_{n=-\infty}^\infty\mathopen{}\left( F_\alpha(n)\mathopen{}\left(\cos\left(2\pi n t\right)-1\right)\mathclose{}+ n F_\alpha'(n)\mathopen{}\left(\frac{\sin(2\pi n t)}{2\pi nt}-1\right)\mathclose{}\right)\mathclose{}, 
\end{align*} and using the half angle formula for cosine, we get \begin{align}
    \widehat{\psi_\alpha}(1-t)&=t\sum_{n=-\infty}^\infty\left( n F_\alpha'(n)\left(\frac{\sin(2\pi n t)}{2\pi nt}-1\right)-t^2\pi^2\cdot 2n^2F_\alpha(n)\frac{\sin^2\left(\pi n t\right)}{\left(\pi n t\right)^2}\right). 
    \label{beforehours}
\end{align}
We make use of the following lemma:
\begin{lemma}
For $x \in \mathbb{R}$ define $\mathcal{R}(x)$ for $x\neq 0$ by \begin{align*}
    \sin x=x-\frac{x^3}{6}+x^3\mathcal{R}(x)
\end{align*} and set $\mathcal{R}(0)=0$. Then $\mathcal{R}$ is continuous and increasing in $|x|$.
    \label{mathfrakRdeflemma}
\end{lemma}
\begin{proof}[Proof of Lemma \ref{mathfrakRdeflemma}]
That $\mathcal{R}$ is continuous follows from the taylor series of $\sin x$. Note $\mathcal{R}$ is even, so it is sufficient to show  $\mathcal{R}'(x)\ge 0$ for $x\ge 0$.
    By differentiating, \begin{align*}
    \mathcal{R}'(x)=\frac{x^3\left(\cos x-1\right)-3x^2\left(\sin x-x\right)}{x^6},
\end{align*} so it is equivalent to show \begin{align*}
    h(x):=x\cos x+2x-3\sin x\ge 0.
\end{align*} By using the half-angle formulae for sine and cosine, we get \begin{align*}
    h'(x)&=4\sin^2\left(\frac{x}{2}\right)-x\sin x=4\sin\left(\frac{x}{2}\right)\cos\left(\frac{x}{2}\right)\left(\tan\left(\frac{x}{2}\right)-\frac{x}{2}\right) \\
    &=2\sin(x)\left(\tan\left(\frac{x}{2}\right)-\frac{x}{2}\right)\ge 0
\end{align*}  if $x \in [0,\pi]$, and so $h$ is increasing on $[0,\pi]$. $h(0)=0$, so this shows $h(x)\ge 0$ if $x \in [0,\pi]$. If $x\ge \pi $, then $h(x)\ge \pi-3\ge 0$, and so $h\ge 0$. This completes the proof.
\end{proof}
From the definition of $\mathcal{R}$ in Lemma \ref{mathfrakRdeflemma}, we write\begin{equation*}
    \frac{\sin\left(2\pi n t\right)}{2\pi nt}-1=-\frac{\left(2\pi  n t\right)^2}{6}+\left(2\pi nt\right)^2\mathcal{R}\left(2\pi nt\right),
\end{equation*} so \begin{equation*}
    \begin{split}
        &\sum_{n=-\infty}^\infty n F_\alpha'(n)\left(\frac{\sin(2\pi n t)}{2\pi nt}-1\right) \\
    &=\pi^2t^2\sum_{n=-\infty}^\infty nF_\alpha'(n)\left(\left(2n\right)^2\mathcal{R}\left(2\pi n t\right)-\frac{2n^2}{3}\right).
    \end{split}
\end{equation*} Plugging this into \eqref{beforehours} gives \begin{equation}
    \frac{\widehat{\psi_\alpha}(1-t)}{\pi^2t^3}=\sum_{n=-\infty}^\infty n^3F_\alpha'(n)\left(-\frac{2}{3}+4\mathcal{R}\left(2\pi n t\right)\right)-\sum_{n=-\infty}^\infty2n^2F_\alpha(n)\frac{\sin^2\left(\pi n t\right)}{\left(\pi n t\right)^2}.
    \label{niceidentity}
\end{equation} So our task is in showing the right-hand is non-negative, for $0\le t\le \frac1{2}$. For $\alpha=4$, we know $s_4=\sqrt{2}$ from Proposition \ref{thm:s4issqrt2prop}, and so we can readily verify this inequality.
\begin{proof}[Proof of \eqref{niceidentity} for $\alpha=4$]
     From \eqref{emu}, $xF_4'(x)=-4F_4(x)\left(1-F_4(x)\right)$, and so $-x^3F_4'(x)=4x^2F_4(x)\left(1-F_4(x)\right)$, so the right-hand side of \eqref{niceidentity} when $\alpha=4$ can be written as \begin{align*}
        \sum_{n=-\infty}^\infty n^2F_4(n)\left(4\left(1-F_4(n)\right)\left(\frac2{3}-4\mathcal{R}\left(2\pi n t\right)\right)-\frac{2\sin^2\left(\pi n t\right)}{\left(\pi n t\right)^2}\right).
    \end{align*} Note $1-F_4(n)\ge \frac4{5}$ for $n\ge 1$, so to prove the lemma, it is sufficient to show \begin{align*}
        \frac{16}5\left(\frac1{3}-2\mathcal{R}(2\pi nt)\right)\underset{\textup{wts}}{\ge} \frac{\sin^2\left(\pi n t\right)}{\left(\pi n t\right)^2}
    \end{align*}for $t \in \left[0,\frac1{2}\right]$, $n\in \mathbb{N}$. Making the substitution $w=\pi nt$ and substituting the expression for $\mathcal{R}$, this inequality simplifies to \begin{align*}
        \frac{4}{5}\cdot\frac{2w-\sin\left(2w\right)}{w^{3}}\underset{\textup{wts}}{\ge} \frac{\sin^2(w)}{w^2},
    \end{align*} for $w\ge 0$, or\begin{align*}
        8w-4\sin(2w)-5w\sin^2(w)\ge 0.
    \end{align*} By differentiating and simplifying, we get \begin{align*}
        \partial_w\left(8w-4\sin(2w)-5w\sin^2(w)\right)&=\sin w\left(11\sin w-10w\cos w\right) \\
        &\ge 10(\sin w)(\cos w)\left(\tan w-w\right)\ge 0
    \end{align*} if $w \in \left[0,\frac{\pi}{2}\right]$, which implies $8w-4\sin(2w)-5w\sin^2(w)\ge 0$ if $w \in \left[0,\frac{\pi}{2}\right]$. If $w\ge \frac{\pi}{2}$, then \begin{align*}
        8w-4\sin(2w)-5w\sin^2(w)\ge 3\cdot \frac{\pi}{2}-4\ge 0,
    \end{align*}  so this shows $ 8w-4\sin(2w)-5w\sin^2(w)\ge 0$, as desired.
\end{proof}
We now focus on $\alpha\ge 6$.
For $\alpha\ge 6$, $t\in \left[0,\frac{1}{2}\right]$, since $\mathcal{R}(x)$ is increasing in $|x|$, \begin{align*}
    -\frac2{3}+4\mathcal{R}\left(2\pi n t\right)\le -\frac2{3}+4\mathcal{R}\left(\pi n \right),
\end{align*} and $n^3F_\alpha'(n)\le 0$, so from \eqref{niceidentity}, we get \begin{equation}
    \frac{\widehat{\psi_\alpha}(1-t)}{\pi^2t^3}\ge \sum_{n=-\infty}^\infty n^3F_\alpha'(n)\left(-\frac{2}{3}+4\mathcal{R}\left(\pi n \right)\right)-\sum_{n=-\infty}^\infty 2n^2F_\alpha(n). \label{loreen}
\end{equation}
Call the constant on the right-hand side $\mathfrak{L}(\alpha)$, so to show $\widehat{\psi_\alpha}(\xi)\ge 0$ for $\xi \in \left[\frac1{2},1\right]$, it is sufficient to show $\mathfrak{L}(\alpha)\ge 0$. When $6\le \alpha\le 10$, we verify this via interval arithmetic, the code for which can be found in the following Github repository \cite{edwin2024}. We now prove it for $\alpha\ge 12$:
\begin{proof}[Proof of $\mathfrak{L}(\alpha)\ge 0$ for $\alpha\ge 12$]
    We write \begin{align*}
    \mathfrak{L}(\alpha)&=-2F_\alpha'(1)\left(\frac2{3}-4\mathcal{R}(\pi)\right)+2\sum_{n=2}^\infty n^3F_\alpha'(n)\left(-\frac2{3}+4\mathcal{R}(\pi n)\right) \\
    &-4F_\alpha(1)-2\sum_{n=2}^\infty 2n^2F_\alpha(n).
\end{align*} From \eqref{boundonFalphaof1} and \eqref{boundFalpha'} (recall $\mathcal{R}(x)$ is increasing in $|x|$) this implies\begin{align*}
    \mathfrak{L}(\alpha)&\ge \left(1-\frac1{2\alpha-4}\right)\left(\frac2{3}-4\mathcal{R}(\pi)\right)-2\sum_{n=2}^\infty n^3F_\alpha'(n)\left(\frac2{3}-4\mathcal{R}(\pi n)\right) \\
    &-\frac{2}{\alpha-2}-2\sum_{n=2}^\infty 2n^2F_\alpha(n).
\end{align*} Note for $n\ge 0$, $-n^3F_\alpha'(n)\ge0$, and $0\le \mathcal{R}(x)\le \frac1{6}$ from Lemma \ref{mathfrakRdeflemma}, so we get \begin{align*}
    \mathfrak{L}(\alpha)\ge \left(1-\frac1{2\alpha-4}\right)\left(\frac2{3}-4\mathcal{R}(\pi)\right)-4\sum_{n=2}^\infty n^2F_\alpha(n)-\frac2{\alpha-2}.
\end{align*} We also know $F_\alpha(n)\le (2\alpha-5)^{-1}n^{-\alpha}$ for $\alpha\ge 12$ from Proposition \ref{thm:asympsalpha}, and combining this with Lemma \ref{estimateforsumoverpowers}, we get \begin{align*}
    \sum_{n=2}^\infty n^2F_\alpha(n)\le \frac1{2\alpha-5}\sum_{n=2}^\infty \frac1{n^{\alpha-2}}\le \frac1{2\alpha-5}\cdot \frac1{2^{\alpha-2}}\left(\frac{\alpha-1}{\alpha-3}\right),
\end{align*} which implies if $\alpha\ge 12$, then \begin{align*}
    \mathfrak{L}(\alpha)\ge \left(1-\frac1{2\alpha-4}\right)\left(\frac2{3}-4\mathcal{R}(\pi)\right)-\frac4{2^{\alpha-2}(2\alpha-5)}\cdot\frac{\alpha-1}{\alpha-3}-\frac2{\alpha-2}\ge 0,
\end{align*} as desired.
\end{proof} 
We now go on to proving $\psi_\alpha(x)\le F_\alpha(x)$:

\section{Proving \texorpdfstring{$\psi_\alpha(x)\le F_\alpha(x)$}{Lg}}\label{chamere}
We want to show $\psi_\alpha(x)\le F_\alpha(x)$, and using the expression for $\psi_\alpha(x)$ in \eqref{psigensecconstr}, it is equivalent to show \begin{equation}
    \frac{\sin^2\left(\pi x\right)}{\pi^2}\left(\sum_{n=-\infty}^\infty \frac{F_\alpha(n)}{\left(x-n\right)^2}+\sum_{n=-\infty}^\infty \frac{F_\alpha'(n)}{x-n}\right)\underset{\textup{wts}}{\le} F_\alpha(x) \label{andre}
\end{equation} for $x\ge 0$, since $\psi_\alpha$ and $F_\alpha$ are even. We first focus on the $\alpha=4$ case, and then move on to $\alpha\ge 6$.\newline

\noindent
\newline

\noindent
\subsection{Proving\texorpdfstring{ $\psi_4(x)\le F_4(x)$}{Lg}}
From Proposition \ref{thm:minimiserprop}, $s_4=\sqrt{2}$, so $F_4(x)=\left(4x^4+1\right)^{-1}$, so the inequality we want to show in \eqref{andre} is \begin{align}
     \frac{\sin^2\left(\pi x\right)}{\pi^2}\left(\sum_{n=-\infty}^\infty \frac{F_4(n)}{\left(x-n\right)^2}+\sum_{n=-\infty}^\infty \frac{F_4'(n)}{x-n}\right)\underset{\textup{wts}}{\le} F_4(x), \label{gagas}
\end{align} for $x\ge 0$. The way we do this is show this holds for $x\ge 9$, and verify it numerically on $[0,9]$. Throughout this part, we make use of the identity \begin{align}
\frac{\pi^2}{\sin^2\left(\pi x\right)}=\sum_{n \in \mathbb{Z}}\frac1{\left(x-n\right)^2}.
    \label{sumover1oversin^2}
\end{align} 
\begin{proof}[Proof of $\psi_4(x)\le F_4(x)$ for $x\ge 9$]
    using the identity \eqref{sumover1oversin^2}, we multiply both sides of \eqref{gagas} by $\frac{\pi^2}{\sin^2(\pi x)}$ to get the equivalent inequality 
    \begin{align}
        \sum_{n \in \mathbb{Z}}\left(\frac{F_4(x)-F_4(n)-F_4'(n)(x-n)}{(x-n)^2}\right)\underset{\textup{wts}}{\ge} 0. 
        \label{hae}
    \end{align} Let $\eta(x)$ be the integer closest to $x$ (rounded up if $x$ is halfway between two integers). By the mean value theorem, \begin{align*}
       \frac{F_4(x)-F_4(\eta(x))-F_4'(\eta(x))(x-\eta(x))}{(x-\eta(x))^2}=\frac1{2}F_4''(s)
    \end{align*} for some $s$ between $x$ and $\eta(x)$. If $x\ge 9$, any such $s$ satisfies $F_4(s)\le \frac1{2}$, and from \eqref{Falphaduobleprime}, this implies $F_4''(s)\ge 0$. So it is sufficient to show
    \begin{align}
    \sum_{n \in \mathbb{Z}\setminus\{\eta(x)\}}\left(\frac{F_4(x)-F_4(n)-F_4'(n)(x-n)}{(x-n)^2}\right)\underset{\textup{wts}}{\ge} 0
    \label{hae2}
\end{align} for $x\ge 9$. We pull out the $n=-\eta(x)$ term, so it is equivalent to show \begin{equation}
    \begin{split}
        &\sum_{\substack{n \in \mathbb{Z} \\ |n|\neq \eta(x)}}\left(\frac{F_4(x)-F_4(n)-F_4'(n)(x-n)}{(x-n)^2}\right)\underset{\textup{wts}}{\ge} \\
        &-\frac{F_4(x)-F_4(\eta(x))+F_4'(\eta(x))(x+\eta(x))}{(x+\eta(x))^2}.
    \end{split}
    \label{watermelonsugar}
\end{equation} We start by analysing the left-hand side, for general $\alpha$ which will be useful later. Let \begin{align}
    L_\alpha(x,n)=\frac{F_\alpha(x)-F_\alpha(n)-F_\alpha'(n)(x-n)}{(x-n)^2},
    \label{Lxndef}
\end{align} and note \begin{align*}
    &\frac{x^2\left(L_\alpha(x,n)+L_\alpha(x,-n)\right)}{2} \\
    &=x^2\left(\left(F_\alpha(x)-F_\alpha(n)\right)\left(\frac{x^2+n^2}{\left(x^2-n^2\right)^2}\right)-\frac{nF_\alpha'(n)}{x^2-n^2}\right) \\
    &= \left(F_\alpha(x)-F_\alpha(n)\right)\left(\frac{x^4+n^2x^2}{\left(x^2-n^2\right)^2}\right)-\frac{nx^2F_\alpha'(n)}{x^2-n^2} \\
    &=-F_\alpha(n)-nF_\alpha'(n)+F_\alpha(x)\left(\frac{x^4+n^2x^2}{\left(x^2-n^2\right)^2}\right) \\
    &-F_\alpha(n)\left(\frac{3n^2x^2-n^4}{\left(x^2-n^2\right)^2}\right)-\frac{n^3F_\alpha'(n)}{x^2-n^2},
\end{align*} and multiplying both sides by $x^2$ again, \begin{equation*}
         \begin{split}
             &\frac{x^4\left(L_\alpha(x,n)+L_\alpha(x,-n)\right)}{2} \\
     &=-x^2\left(F_\alpha(n)+nF_\alpha'(n)\right)+x^4F_\alpha(x)\left(\frac{x^2+n^2}{\left(x^2-n^2\right)^2}\right) \\
     &-F_\alpha(n)\left(\frac{3n^2x^4-n^4x^2}{\left(x^2-n^2\right)^2}-3n^2+3n^2\right)-\frac{n^3x^2F_\alpha'(n)}{x^2-n^2} \\
     &=-x^2\left(F_\alpha(n)+nF_\alpha'(n)\right)+\frac{x^4F_\alpha(x)\left(x^2+n^2\right)-n^4F_\alpha(n)\left(5x^2-3n^2\right)}{\left(x^2-n^2\right)^2} \\
     &-3n^2F_\alpha(n)-n^3F_\alpha'(n)-\frac{n^5F_\alpha'(n)}{x^2-n^2} \\
     &=-x^2\left(F_\alpha(n)+nF_\alpha'(n)\right)+\frac{x^4F_\alpha(x)\left(x^2+n^2\right)-2n^6F_\alpha(n)}{\left(x^2-n^2\right)^2} \\
     &-3n^2F_\alpha(n)-n^3F_\alpha'(n)-\frac{5n^4F_\alpha(n)+n^5F_\alpha'(n)}{x^2-n^2}.
         \end{split}
\end{equation*} Summing both sides over $n \in \mathbb{Z}\setminus\{\eta(x),-\eta(x)\}$, using the definition of $L_\alpha(x,n)$ in \eqref{Lxndef} and applying the identity in \eqref{accfirstordercond}, this implies 
\begin{equation}
\begin{split}
     &x^4  \sum_{\substack{n \in \mathbb{Z} \\ |n|\neq \eta(x)}}\left(\frac{F_\alpha(x)-F_\alpha(n)-F_\alpha'(n)(x-n)}{(x-n)^2}\right) \\
    &= 2x^2\left(F_\alpha(\eta(x))+\eta(x)F_\alpha'(\eta(x))\right)-\sum_{\substack{n \in \mathbb{Z} \\ |n|\neq \eta(x)}}\left(3n^2F_\alpha(n)+n^3F_\alpha'(n)\right) \\
    &+\sum_{\substack{n \in \mathbb{Z} \\ |n|\neq \eta(x)}}\left(\frac{x^4F_\alpha(x)\left(x^2+n^2\right)-2n^6F_\alpha(n)}{\left(x^2-n^2\right)^2}-\frac{5n^4F_\alpha(n)+n^5F_\alpha'(n)}{x^2-n^2}\right).
\end{split}
    \label{ipadair}
\end{equation} This identity will be useful later when working in the $\alpha\ge 6$ case. Now, some algebra. Using the identity $x^4F_4(x)=\frac1{4}-\frac{F_4(x)}{4}$,$n^4F_4(n)=\frac1{4}-\frac{F_4(n)}{4}$ we see that 
\begin{equation}
\begin{split}
     &\frac{x^4F_4(x)\left(x^2+n^2\right)-2n^6F_4(n)}{\left(x^2-n^2\right)^2}-\frac{5n^4F_4(n)+n^5F_4'(n)}{x^2-n^2} \\
&=\frac{\frac1{4}-5n^4F_4(n)-n^5F_4'(n)}{x^2-n^2}+\frac{\frac{n^2F_4(n)}{2}-\frac{F_4(x)\left(x^2+n^2\right)}{4}}{\left(x^2-n^2\right)^2}.
\end{split}
    \label{Stella}
\end{equation} 
Evaluating the numerator of the first term of the right-hand side, we get \begin{align*}
    \frac1{4}-5n^4F_4(n)-n^5F_4'(n)&=-\frac{3n^4-\frac1{4}}{\left(4n^4+1\right)^2}.
\end{align*} Moreover, if $|n|\neq \eta(x)$, then \begin{align*}
    \frac1{\left|x^2-n^2\right|}=\frac1{x^2}\left|1+\frac{n^2}{x^2-n^2}\right|\le \frac1{x^2}\left|1+\frac{n^2}{\left|x-n\right||n|}\right|\le \frac{1+2|n|}{x^2}.
\end{align*}  From \eqref{Stella}, this implies \begin{align*}
    & \frac{x^4F_4(x)\left(x^2+n^2\right)-2n^6F_4(n)}{\left(x^2-n^2\right)^2}-\frac{5n^4F_4(n)+n^5F_4'(n)}{x^2-n^2}\ge \\
    &-\frac{\left(3n^4-\frac1{4}\right)(2|n|+1)}{\left(4n^4+1\right)^2x^2}-\frac{F_4(x)}{4}\cdot \frac{x^2+n^2}{\left(x^2-n^2\right)^2},
\end{align*} and so \begin{align*}
   & \sum_{\substack{n \in \mathbb{Z} \\ |n|\neq \eta(x)}}\left(\frac{x^4F_4(x)\left(x^2+n^2\right)-2n^6F_4(n)}{\left(x^2-n^2\right)^2}-\frac{5n^4F_4(n)+n^5F_4'(n)}{x^2-n^2}\right) \\
    &\ge -\frac1{x^2}\sum_{n \in \mathbb{Z}}\frac{\left(3n^4-\frac1{4}\right)(2|n|+1)}{\left(4n^4+1\right)^2}-\frac{F_4(x)}{4}\sum_{\substack{n \in \mathbb{Z} \\ |n|\neq \eta(x)}} \frac{x^2+n^2}{\left(x^2-n^2\right)^2} \\
    &\ge -\frac1{x^2}-\frac{F_4(x)}{4}\sum_{\substack{n \in \mathbb{Z} \\ |n|\neq \eta(x)}} \frac{x^2+n^2}{\left(x^2-n^2\right)^2}.
\end{align*}
Bounding the second term on the right-hand side, \begin{equation}
    \begin{split}
        \sum_{\substack{n \in \mathbb{Z} \\ |n|\neq \eta(x)}}\frac{x^2+n^2}{\left(x^2-n^2\right)^2}&=\frac1{2}\sum_{\substack{n \in \mathbb{Z} \\ |n|\neq \eta(x)}}\left(\frac1{(x-n)^2}+\frac1{(x+n)^2}\right) \\
    &\le \sum_{k=0}^\infty\frac2{\left(k+\frac1{2}\right)^2}\le 10,
    \end{split}
    \label{obrien}
\end{equation} which implies \begin{align*}
   &\sum_{\substack{n \in \mathbb{Z} \\ |n|\neq \eta(x)}}\left(\frac{x^4F_4(x)\left(x^2+n^2\right)-2n^6F_4(n)}{\left(x^2-n^2\right)^2}-\frac{5n^4F_4(n)+n^5F_4'(n)}{x^2-n^2}\right) \\
   &\ge -\frac1{x^2}-\frac{5F_4(x)}{2}.
\end{align*} Substituting this into \eqref{ipadair} implies \begin{align*}
   & x^4\sum_{\substack{n \in \mathbb{Z} \\ |n|\neq \eta(x)}}\left(\frac{F_4(x)-F_4(n)-F_4'(n)(x-n)}{(x-n)^2}\right) \\
   &\ge 2x^2\eta(x)F_4'(\eta(x))-\sum_{\substack{n \in \mathbb{Z} \\ |n|\neq \eta(x)}}\left(3n^2F_4(n)+n^3F_4'(n)\right)-\frac1{x^2}-\frac{5F_4(x)}{2}.
\end{align*}
Going back to \eqref{watermelonsugar}, it is sufficient to show that \begin{equation}
\begin{split}
     &2x^2\eta(x)F_4'(\eta(x))-\sum_{\substack{n \in \mathbb{Z} }}\left(3n^2F_4(n)+n^3F_4'(n)\right)-\frac1{x^2}-\frac{5F_4(x)}{2} \\
     &\underset{\textup{wts}}{\ge} -\frac{x^4\left(F_4(x)-F_4(\eta(x))+F_4'(\eta(x))(x+\eta(x))\right)}{(x+\eta(x))^2} \\
    & -6\eta(x)^2F_4(\eta(x))-2\eta(x)^3F_4'(\eta(x)).
\end{split}
    \label{onerightnow}
\end{equation} At this point we are almost done. Differentiating $F_4$, we see that \begin{align*}
    -F_4'(x)=\frac{16x^3}{\left(4x^4+1\right)^2}\le \frac1{x^5},
\end{align*} and from this, we may deduce for $x\ge 9$, \begin{align*}
    & -2\left(x^2\eta(x)+\eta(x)^3\right)F_4'(\eta(x))-6\eta(x)^2F_4(\eta(x)) \\
    &-\frac{x^4\left(F_4(x)-F_4(\eta(x))+F_4'(\eta(x))(x+\eta(x))\right)}{(x+\eta(x))^2} \\
    &\le -6\eta(x)^3F_4'(\eta(x))+\frac{x^4}{(x+\eta(x))^2}\left(F_4(\eta(x))-F_4'(\eta(x))(x+\eta(x))\right) \\
    &\le -8\eta(x)^3F_4'(\eta(x))+2\eta(x)^2F_4(\eta(x))\le \frac{8}{\eta(x)^2}+\frac1{2\eta(x)^2}\le \frac{10}{x^2},
\end{align*} so from \eqref{onerightnow}, it suffices to show \begin{align*}
    -\sum_{\substack{n \in \mathbb{Z} }}\left(3n^2F_4(n)+n^3F_4'(n)\right)-\frac1{x^2}-\frac{5F_4(x)}{2}\ge \frac{10}{x^2}
\end{align*} for $x\ge 9$, which follows from the fact that \begin{align*}
    -\sum_{\substack{n \in \mathbb{Z} }}\left(3n^2F_4(n)+n^3F_4'(n)\right)\ge \frac{10}{81}+\frac1{81}+\frac{5F_4(9)}{2}.
\end{align*}
\end{proof} It thus remains to show $\psi_4(x)\le F_4(x)$ for $0\le x\le 9$, and we do this by showing \eqref{hae}. For referencing we put this in the following lemma:
\begin{restatable}{lemma}{psifourleFfourx}
    Let $F_4(x)=\left(4x^4+1\right)^{-1}$. Then for $0\le x\le 9$, \begin{align*}
        \sum_{n \in \mathbb{Z}}\left(\frac{F_4(x)-F_4(x)-F_4'(n)(x-n)}{(x-n)^2}\right)\ge 0.
    \end{align*}
    \label{thm:psifourleFourx}
\end{restatable} The verification for this inequality and others can be found in the following Github repository \cite{edwin2024}.
We now move onto $\alpha\ge 6$ case. Recall the inequality we want to show, \eqref{andre}:
\begin{align*}
    \frac{\sin^2\left(\pi x\right)}{\pi^2}\left(\sum_{n=-\infty}^\infty \frac{F_\alpha(n)}{\left(x-n\right)^2}+\sum_{n=-\infty}^\infty \frac{F_\alpha'(n)}{x-n}\right)\underset{\textup{wts}}{\le} F_\alpha(x).
\end{align*}
As before, let $\eta(x)$ be the closest integer to $x$. We write $x=\eta(x)+t$, $|t|\le \frac1{2}$.
We write the left-hand side of the above inequality as 
\begin{equation*}
     \frac{\sin^2\left(\pi t\right)}{\pi^2}\sum_{n \in \mathbb{Z}\setminus\{\eta(x)\}}\left(\frac{F_\alpha(n)}{\left(x-n\right)^2}+\frac{F_\alpha'(n)}{x-n}\right)+\frac{\sin^2\left(\pi t\right)}{\pi^2t^2}\left(F_\alpha\left(\eta(x)\right)+tF_\alpha'\left(\eta(x)\right)\right),
\end{equation*} so we want to show \begin{equation}
    \begin{split}
        &F_\alpha(x)-\frac{\sin^2\left(\pi t\right)}{\pi^2t^2}\left(F_\alpha\left(\eta(x)\right)+tF_\alpha'\left(\eta(x)\right)\right) \\
        &\underset{\textup{wts}}{\ge} \frac{\sin^2\left(\pi t\right)}{\pi^2}\sum_{n \in \mathbb{Z}\setminus\{\eta(x)\}}\left(\frac{F_\alpha(n)}{\left(x-n\right)^2}+\frac{F_\alpha'(n)}{x-n}\right).
    \end{split}
    \label{mi}
\end{equation}
We consider three cases, when $\eta(x)=0$, $\eta(x)=1$, and when $\eta(x)\ge 2$:

\subsection{Proving \texorpdfstring{\eqref{mi}}{Lg} for \texorpdfstring{$\eta(x)=0$}{Lg}}\label{etaxis0}
Here we want to show\begin{equation*}
    F_\alpha(x)-\frac{\sin^2\left(\pi x\right)}{\left(\pi x\right)^2}\underset{\textup{wts}}{\ge} \frac{\sin^2\left(\pi x\right)}{\pi^2}\sum_{\substack{n \in \mathbb{Z}\setminus\{0\}}}\left(\frac{F_\alpha(n)}{\left(x-n\right)^2}+\frac{F_\alpha'(n)}{x-n}\right),
\end{equation*} when $x \in \left[0,\frac1{2}\right]$. Using the identity \eqref{sumover1oversin^2} we multiply both sides of the above inequality by $\frac{\pi^2}{\sin^2\left(\pi x\right)}$ and use the fact the $F_\alpha'(n)$ is odd in $n$ to get the equivalent inequality
\begin{align*}
    F_\alpha(x)\sum_{n \in \mathbb{Z}}\frac1{\left(n-x\right)^2}-\frac1{x^2}\underset{\textup{wts}}{\ge} \sum_{n \in \mathbb{Z}\setminus\{0\}}\left(\frac{F_\alpha(n)}{\left(x-n\right)^2}+\frac{nF_\alpha'(n)}{x^2-n^2}\right),
\end{align*} which we write as
\begin{equation*}
    \frac{F_\alpha(x)-1}{x^2}+\sum_{n \in \mathbb{Z}\setminus\{0\}}\frac{F_\alpha(x)-F_\alpha(n)}{\left(x-n\right)^2}\underset{\textup{wts}}\ge \sum_{n \in \mathbb{Z}\setminus\{0\}}\frac{nF_\alpha'(n)}{x^2-n^2}.
\end{equation*} For $0\le x\le \frac1{2}$, $|n|\ge 1$, $\frac1{(n-x)^2}+\frac1{(n+x)^2}$ is increasing in $x$, and since $nF_\alpha'(n)\le 0$, for $0\le x\le \frac1{2}$, it is sufficient to show \begin{align}
    \frac{F_\alpha(x)-1}{x^2}+\sum_{n \in \mathbb{Z}\setminus\{0\}}\frac{F_\alpha\mathopen{}\left(\frac1{2}\right)-F_\alpha(n)}{n^2}\underset{\textup{wts}}{\ge} \sum_{n \in \mathbb{Z}\setminus\{0\}}\frac{nF_\alpha'(n)}{\frac1{4}-n^2}.
    \label{mazda}
\end{align} for $x \in \left[0,\frac1{2}\right]$ to prove \eqref{mi}. We can also observe $\frac{F_\alpha(x)-1}{x^2}$ is decreasing in $x$ for $x \in \left[0,\frac1{2}\right]$: Differentiating, \begin{align*}
    \partial_x\left(\frac{F_\alpha(x)-1}{x^2}\right)=\frac{F_\alpha'(x)x^2-2x\left(F_\alpha(x)-1\right)}{x^4},
\end{align*} so we have to show $F_\alpha'(x)x\le 2\left(F_\alpha(x)-1\right)$ for $0\le x\le \frac1{2}$ to show what we want. From \eqref{emu} $xF_\alpha'(x)=-\alpha F_\alpha(x)(1-F_\alpha(x))=\alpha F_\alpha(x)\left(F_\alpha(x)-1\right)$, so it is equivalent to show $F_\alpha(x)\ge \frac2{\alpha}$ for all $x \in \left[0,\frac1{2}\right]$, which follows from the fact that \begin{align*}
    F_\alpha\mathopen{}\left(\frac1{2}\right)\mathclose{}\ge 
    \frac1{1+\left(\frac1{2}\right)^\alpha}\ge \frac2{\alpha}
\end{align*} for $\alpha\ge 6$. This means $\frac{F_\alpha(x)-1}{x^2}$ is decreasing in $x$ for $x \in \left[0,\frac1{2}\right]$,  which means it is sufficient to show  \begin{align}
        4\left(F_\alpha\mathopen{}\left(\frac1{2}\right)\mathclose{}-1\right)+\sum_{n \in \mathbb{Z}\setminus\{0\}}\frac{F_\alpha\mathopen{}\left(\frac1{2}\right)-F_\alpha(n)}{n^2}\underset{\textup{wts}}{\ge} \sum_{n \in \mathbb{Z}\setminus\{0\}}\frac{nF_\alpha'(n)}{\frac1{4}-n^2}
        \label{melody}
    \end{align} to prove \eqref{mazda}. We verify this via interval arithmetic in Julia \cite{IntervalArithmetic.jl} when $6\le \alpha\le 10$, carried out in the following Github repository \cite{edwin2024}. For $\alpha\ge 12$, we prove it below.
\begin{proof}[Proof of \eqref{melody} for $\alpha\ge 12$.]
From Proposition \ref{thm:asympsalpha} and \eqref{emu}, we may deduce \begin{equation*}
    \begin{split}
        \sum_{n \in \mathbb{Z}\setminus\{0\}}\frac{nF_\alpha'(n)}{\frac1{4}-n^2}&=-\frac{8F_\alpha'(1)}{3}+2\sum_{n=2}^\infty\frac{nF_\alpha'(n)}{\frac1{4}-n^2}\le \frac{4\alpha}{3\alpha-6}+\sum_{n=2}^\infty \frac{2\alpha F_\alpha(n)}{4-\frac1{4}} \\
    &\le \frac{4\alpha}{3\alpha-6}+\sum_{n=2}^\infty\frac1{n^\alpha},
    \end{split}
\end{equation*} and from Lemma \ref{estimateforsumoverpowers},  we get \begin{align*}
    \sum_{n \in \mathbb{Z}\setminus\{0\}}\frac{nF_\alpha'(n)}{\frac1{4}-n^2}&\le \frac{4\alpha}{3\alpha-6}+\frac1{2^\alpha}\left(\frac{\alpha+1}{\alpha-1}\right).
\end{align*} The estimates in Proposition \ref{thm:asympsalpha} imply $F_\alpha\mathopen{}\left(\frac1{2}\right)-F_\alpha(1)\ge 0.94$  if $\alpha\ge 12$, and $F_\alpha\mathopen{}\left(\frac1{2}\right)-1\ge -0.01$, hence \begin{equation*}
    \begin{split}
        4\left(F_\alpha\mathopen{}\left(\frac1{2}\right)-1\right)+\sum_{n \in \mathbb{Z}\setminus\{0\}}\frac{F_\alpha\mathopen{}\left(\frac1{2}\right)\mathclose{}-F_\alpha(n)}{n^2}&\ge -0.04+\frac{0.94\pi^2}{3} \\
    &\ge \frac{4\alpha}{3\alpha-6}+\frac1{2^\alpha}\left(\frac{\alpha+1}{\alpha-1}\right),
    \end{split}
\end{equation*} for $\alpha\ge 12$, which proves \eqref{melody} for $\alpha\ge 12$.
\end{proof} 
This takes care of when $\eta(x)=0$. We now consider $\eta(x)\ge 1$.
Here we want to show with $x=1+t$, \begin{equation}
     \begin{split}
         &F_\alpha(1+t)-\frac{\sin^2\left(\pi t\right)}{\pi^2t^2}\left(F_\alpha\left(1\right)+tF_\alpha'\left(1\right)\right) \\
         &\underset{\textup{wts}}{\ge} \frac{\sin^2\left(\pi t\right)}{\pi^2}\sum_{n \in \mathbb{Z}\setminus\{1\}}\left(\frac{F_\alpha(n)}{\left(1+t-n\right)^2}+\frac{F_\alpha'(n)}{1+t-n}\right), 
     \end{split}
\end{equation}
 for all $t \in \left[-\frac1{2},\frac1{2}\right]$. We multiply both sides by $\frac{\pi^2}{\sin^2\left(\pi t\right)}$ using \eqref{sumover1oversin^2} to get the equivalent inequality \begin{align*}
     &F_\alpha\left(1+t\right)\sum_{n \in \mathbb{Z}}\frac1{\left(n-t\right)^2}-\frac{F_\alpha(1)+tF_\alpha'(1)}{t^2} \\
     &\underset{\textup{wts}}{\ge} \sum_{n \in \mathbb{Z}\setminus\{1\}}\left(\frac{F_\alpha(n)}{\left(1+t-n\right)^2}+\frac{F_\alpha'(n)}{1+t-n}\right),
 \end{align*} or \begin{equation*}
     \begin{split}
         &\frac{F_\alpha(1+t)-F_\alpha(1)-tF_\alpha'(1)}{t^2}+\sum_{n \in \mathbb{Z}\setminus\{0\}}\frac{F_\alpha(1+t)}{\left(n-t\right)^2} \\
     &\underset{\textup{wts}}{\ge} \sum_{n \in \mathbb{Z}\setminus\{1\}}\left(\frac{F_\alpha(n)}{\left(1+t-n\right)^2}+\frac{F_\alpha'(n)}{1+t-n}\right),
     \end{split}
 \end{equation*} 
 for all $t \in \left[-\frac1{2},\frac1{2}\right]$. We pull out the dominant terms in the right-hand side. Let \begin{equation}
  \mathcal{B}(\alpha,t)=\sum_{\substack{n \in \mathbb{Z} \\ |n|\ge 2}}\left(\frac{F_\alpha(n)}{(1+t-n)^2}+\frac{F_\alpha'(n)}{1+t-n}\right),
      \label{mathcalBalphatdef}
  \end{equation} so we want to show \begin{equation}
       \begin{split}
            &\frac{F_\alpha(1+t)-F_\alpha(1)-tF_\alpha'(1)}{t^2}+\sum_{n \in \mathbb{Z}\setminus\{0\}}\frac{F_\alpha(1+t)}{\left(n-t\right)^2} \\
        &\underset{\textup{wts}}{\ge} \frac1{(1+t)^2}+\frac{F_\alpha(1)}{(2+t)^2}-\frac{F_\alpha'(1)}{(2+t)}+\mathcal{B}(\alpha,t)
       \end{split}
        \label{vivalavida}
   \end{equation} for all $t\in \left[-\frac1{2},\frac1{2}\right]$. When $\alpha\le 1000$, we employ a numerical approach via interval arithmetic in Julia \cite{IntervalArithmetic.jl}, laid out in the following Github repository \cite{edwin2024}. For $\alpha\ge 1000$, we do this directly by hand.\newline
   
 \noindent
 We start by getting an upper bound for $\mathcal{B}(\alpha,t)$. From \eqref{emu}, $\left|F_\alpha'(n)\right|\le \alpha F_\alpha(n)$ for $n\ge 1$ and the bound on $s_\alpha$ in Proposition \ref{thm:asympsalpha} implies $F_\alpha(n)\le \alpha^{-1}n^{-\alpha}$, so if $|n|\ge 2$ and $|t|\le \frac1{2}$, then $|1+t-n|\ge |n|-\frac3{2}$ so \begin{align*}
     \left|\mathcal{B}(\alpha,t)\right|&\le 2\sum_{n=2}^\infty \left(\frac1{\left(n-\frac3{2}\right)^2}+\frac{\alpha}{n-\frac3{2}}\right)F_\alpha(n)\le 2\sum_{n=2}^\infty \left(\frac{4+2\alpha}{\alpha}\right)\frac1{n^\alpha} \\
     &\le \sum_{n=2}^\infty \frac5{n^\alpha}\le \frac5{2^\alpha}\left(\frac{\alpha+1}{\alpha-1}\right)\le \frac{10}{2^\alpha}
 \end{align*} for $\alpha\ge 1000$ and from Lemma \ref{estimateforsumoverpowers}, so it is sufficient to show for all $t\in \left[-\frac1{2},\frac1{2}\right]$, \begin{equation}
     \begin{split}
         &\frac{F_\alpha(1+t)-F_\alpha(1)-tF_\alpha'(1)}{t^2}+\sum_{n \in \mathbb{Z}\setminus\{0\}}\frac{F_\alpha(1+t)}{\left(n-t\right)^2} \\
         &\underset{\textup{wts}}{\ge} \frac1{(1+t)^2}+\frac{F_\alpha(1)}{(2+t)^2}-\frac{F_\alpha'(1)}{2+t}+\frac{10}{2^\alpha}.
     \end{split}
     \label{thingtoshow}
 \end{equation}
       We employ different approaches for $t\in \left[0,\frac1{2}\right]$, and $t \in \left[-\frac1{2},0\right)$:
       
\subsection{ Showing \texorpdfstring{\eqref{thingtoshow} when $t\in \left[0,\frac1{2}\right]$}{Lg}}
       Here we consider two cases.\newline

       \noindent
       \underline{\textbf{Case 1: Showing \eqref{thingtoshow} when $t \in \left[0,\frac1{2}\right]$ and $\alpha F_\alpha(1+t)\ge 10^{-2}$.} }\newline
       \noindent
     We use the mean value theorem on the left-hand side of \eqref{thingtoshow}, so it is then sufficient to show \begin{align}
      \frac1{2}\min_{s \in [0,t]}F_\alpha''(1+s)+\sum_{n \in \mathbb{Z}\setminus\{0\}}\frac{F_\alpha(1+t)}{(n-t)^2}\underset{\textup{wts}}{\ge} \frac1{(1+t)^2}+\frac{F_\alpha(1)}{(2+t)^2}-\frac{F_\alpha'(1)}{2+t}+\frac{10}{2^\alpha}. \label{alors}
  \end{align} From \eqref{Falphaduobleprime}, 
  \begin{align*}
       F_\alpha''(x)&=\alpha F_\alpha(x)\left(\frac{1-F_\alpha(x)}{x^2}\right)\left(\alpha\left(1-2F_\alpha(x)\right)+1\right),
  \end{align*} so for any $s \in [0,t] \subset \left[0,\frac1{2}\right]$,
  \begin{align*}
      F_\alpha''(1+s)&\ge \alpha^2F_\alpha(1+s)\left(\frac{1-F_\alpha(1)}{(1+s)^2}\right)\left(1-2F_\alpha(1)\right) \\
      &\ge \frac{4\alpha}{9}\cdot\alpha F_\alpha(1+t)\left(1-F_\alpha(1)\right)\left(1-2F_\alpha(1)\right) \\
      &\ge \frac{4\alpha \cdot 10^{-2}}{9}\left(1-F_\alpha(1)\right)\left(1-2F_\alpha(1)\right) \\
      &\ge \frac{4\alpha \cdot 10^{-2}}{9}\left(1-\frac1{2\alpha-4}\right)\left(1-\frac1{\alpha-2}\right),
  \end{align*} from the bounds of $F_\alpha(1)$ in Proposition \ref{thm:asympsalpha}. If $\alpha\ge 1000$, the right-hand side above is at least $2.21$. Going back to \eqref{alors}, it is then sufficient to show \begin{align*}
    2.21\underset{\textup{wts}}{\ge} \frac1{(1+t)^2}+\frac{F_\alpha(1)}{(2+t)^2}-\frac{F_\alpha'(1)}{2+t}+\frac{10}{2^\alpha}.
\end{align*} Using the bounds on $F_\alpha(1)$ and $F_\alpha'(1)$ in Proposition \ref{thm:asympsalpha}, we see that if $\alpha\ge 1000$, then  $F_\alpha(1)\le \frac1{1996}$, $-F_\alpha'(1)\le \frac{1000}{1996}$, so if $t \in \left[0,\frac1{2}\right]$, then \begin{align*}
    \frac1{(1+t)^2}+\frac{F_\alpha(1)}{(2+t)^2}-\frac{F_\alpha'(1)}{2+t}+\frac{10}{2^\alpha}&\le 1+\frac1{4\cdot 1996}+\frac1{2}\cdot \frac{1000}{1996}+\frac{10}{2^{1000}}\approx 1.251 \\
    &<2.21,
\end{align*} which takes care of \eqref{thingtoshow} for $\alpha\ge 1000$, in this case.\newline

\noindent
\underline{\textbf{Case 2: Showing \eqref{thingtoshow} when $t \in \left[0,\frac1{2}\right]$ and $\alpha F_\alpha(1+t)<10^{-2}$.}}\newline
\noindent
In this case, it is sufficient to show \begin{align*}
      -\frac{F_\alpha(1)}{t^2}-\frac{F_\alpha'(1)}{t}\underset{\textup{wts}}{\ge} \frac1{(1+t)^2}+\frac{F_\alpha(1)}{(2+t)^2}-\frac{F_\alpha'(1)}{2+t}+\frac{10}{2^\alpha},
  \end{align*}  or equivalently, \begin{align*}
      -\frac{2F_\alpha'(1)}{t(2+t)}\underset{\textup{wts}}{\ge} \frac1{(1+t)^2}+F_\alpha(1)\left(\frac1{t^2}+\frac1{(2+t)^2}\right)+\frac{10}{2^\alpha}.
  \end{align*} From the bounds on $F_\alpha(1)$ and $F_\alpha'(1)$ in Proposition \ref{thm:asympsalpha}, it is sufficient to show \begin{align}
      \frac{1-\frac1{2\alpha-4}}{t(2+t)}\underset{\textup{wts}}{\ge} \frac1{(1+t)^2}+\frac1{\alpha}\left(\frac1{t^2}+\frac1{(2+t)^2}\right)+\frac{10}{2^\alpha},
      \label{emote}
  \end{align} when $t \in \left[0,\frac1{2}\right]$ and $\alpha F_\alpha(1+t)<10^{-2}$. 
  The inequality $\alpha F_\alpha(1+t)<10^{-2}$ can be written explicitly as \begin{align*}
       \frac{\alpha}{1+s_\alpha^\alpha \left(1+t\right)^\alpha}<10^{-2}.
  \end{align*} From Proposition \ref{thm:asympsalpha}, $s_\alpha^\alpha \le 2\alpha$ if $\alpha\ge 12$, so the above inequality implies \begin{align*}
      \frac{\alpha}{1+2\alpha\left(1+t\right)^\alpha}<10^{-2} &\implies 2(1+t)^\alpha +\alpha^{-1}>100   \\
      &\implies 1+t>\left(50-\frac1{2\alpha}\right)^{\frac1{\alpha}}.
  \end{align*} Let $g:[0,0.01] \to [0,\infty)$ be given by \begin{align*}
      g(x)=\left(50-\frac{x}{2}\right)^x,
  \end{align*} so the above chain is saying that $1+t>g\left(\alpha^{-1}\right)$. We can note that $g$ is convex: By taking the second derivative, we get \begin{align*}
      g''(x)=g(x)\left(\left(\log\left(50-\frac{x}{2}\right)-\frac{x}{100-x}\right)^2-\left(\frac1{100-x}+\frac{100}{\left(x-100\right)^2}\right)\right),
  \end{align*}
  which is non-negative if $x \in [0.01]$. In that case, since $g$ is convex and $\alpha^{-1}<0.01$, the inequality $1+t>g\mathopen{}\left(\alpha^{-1}\right)$ implies \begin{align*}
      1+t>g(0)+g'(0)\alpha^{-1}.
  \end{align*} $g(0)=1$, and $g'(0)=\log (50)$, so we get $1+t>1+\alpha^{-1}\log(50)$, so $\alpha t>\log (50)$. In that case, we can write \begin{align*}
       &\frac{1-\frac1{2\alpha-4}}{t(2+t)}-\left(\frac1{(1+t)^2}+\frac1{\alpha}\left(\frac1{t^2}+\frac1{(2+t)^2}\right)+\frac{10}{2^\alpha}\right) \\
       &=w\left(\frac{1}{t}\left(\frac{1-\frac1{2\alpha-4}}{2+t}-\frac1{\alpha t}\right)-\frac1{(1+t)^2}\right) \\
       &+(1-w)\left(\frac{1-\frac1{2\alpha-4}}{t(2+t)}-\frac1{(1+t)^2}-\frac1{\alpha t^2}\right)-\frac1{\alpha(2+t)^2}-\frac{10}{2^\alpha}, \\
       &\ge w\left(\frac{1}{t}\left(\frac{1-\frac1{1996}}{2+t}-\frac1{\log 50}\right)-\frac1{(1+t)^2}\right) \\
       &+(1-w)\left(\frac{1-\frac1{1996}}{t(2+t)}-\frac1{(1+t)^2}-\frac1{\alpha t^2}\right)-\frac1{\alpha(2+t)^2}-\frac{10}{2^\alpha},
  \end{align*} for any $w \in [0,1]$. Now, if $t\in [0,0.1]$, we choose $w=1$, in which case we get \begin{align*}
       &\frac{1-\frac1{2\alpha-4}}{t(2+t)}-\left(\frac1{(1+t)^2}+\frac1{\alpha}\left(\frac1{t^2}+\frac1{(2+t)^2}\right)+\frac{10}{2^\alpha}\right) \\
       &\ge \frac1{t}\left(\frac{1-\frac1{1996}}{2+t}-\frac1{\log 50}\right)-\frac1{(1+t)^2}-\frac1{\alpha(2+t)^2}-\frac{10}{2^\alpha} \\
       &\ge 10\left(\frac{1-\frac1{1996}}{2.1}-\frac1{\log 50}\right)-1-\frac{1}{4000}-\frac{10}{2^{1000}}\ge 0,
  \end{align*} 
  and if $t \in \left[0.1,\frac1{2}\right]$, we choose $w=0$ which implies \begin{align*}
   &\frac{1-\frac1{2\alpha-4}}{t(2+t)}-\left(\frac1{(1+t)^2}+\frac1{\alpha}\left(\frac1{t^2}+\frac1{(2+t)^2}\right)+\frac{10}{2^\alpha}\right) \\
      &\ge \frac{1-\frac1{1996}}{t(2+t)}-\frac1{(1+t)^2}-\frac1{\alpha t^2}-\frac1{\alpha(2+t)^2}-\frac{10}{2^\alpha} \\
      &\ge  \frac{1-\frac1{1996}}{0.5(2.5)}-\frac1{(1.5)^2}-\frac1{1000\cdot 0.15^2}-\frac1{1000\cdot 2.15^2}-\frac{10}{2^{1000}}\ge 0.
  \end{align*} These two inequalities imply \begin{align*}
      \frac{1-\frac1{2\alpha-4}}{t(2+t)}-\left(\frac1{(1+t)^2}+\frac1{\alpha}\left(\frac1{t^2}+\frac1{(2+t)^2}\right)+\frac{10}{2^\alpha}\right)\ge 0
  \end{align*} when $t \in \left[0,\frac1{2}\right]$ and $\alpha F_\alpha(1+t)<10^{-2}$, as desired. 
This completes the proof for of \eqref{thingtoshow} for $t\in \left[0,\frac1{2}\right]$.

\subsection{Showing \texorpdfstring{\eqref{thingtoshow} when $t \in \left[-\frac1{2},0\right)$}{Lg}}
   Recall the inequality we want to show, \eqref{thingtoshow}: \begin{align*}
   &\frac{F_\alpha(1+t)-F_\alpha(1)-tF_\alpha'(1)}{t^2}+\sum_{n \in \mathbb{Z}\setminus\{0\}}\frac{F_\alpha(1+t)}{\left(n-t\right)^2} \\
   &\underset{\textup{wts}}{\ge} \frac1{(1+t)^2}+\frac{F_\alpha(1)}{(2+t)^2}-\frac{F_\alpha'(1)}{(2+t)}+\frac{10}{2^\alpha}.
   \end{align*} We consider two cases.\newline
   
   \noindent
   \underline{\textbf{Case 1: Showing \eqref{thingtoshow} when $t \in \left[-\frac1{2},0\right)$ and $ F_\alpha(1+t)< 0.48$.}} \newline
   \noindent
   By the mean value theorem, it is sufficient to show \begin{align}
      \frac1{2}\min_{s \in [t,0]}F_\alpha''(1+s)+\sum_{n \in \mathbb{Z}\setminus\{0\}}\frac{F_\alpha(1+t)}{(n-t)^2}\underset{\textup{wts}}{\ge} \frac1{(1+t)^2}+\frac{F_\alpha(1)}{(2+t)^2}-\frac{F_\alpha'(1)}{2+t}+\frac{10}{2^\alpha},
      \label{moone}
  \end{align} when $t \in \left[-\frac1{2},0\right)$ and $ F_\alpha(1+t)< 0.48$. From \eqref{Falphaduobleprime}, \begin{align*}
      F_\alpha''(1+s)&=\alpha F_\alpha(1+s)\left(\frac{1-F_\alpha(1+s)}{(1+s)^2}\right)\left(\alpha\left(1-2F_\alpha(1+s)\right)+1\right),  \end{align*}  and if $ F_\alpha(1+t)< 0.48$ and $1>1+s>1+t$, then $ F_\alpha(1)\le F_\alpha(1+s)<0.48$, so \begin{align*}
      F_\alpha''(1+s)&\ge \alpha^2  F_\alpha(1)(1-0.48)\left(1-2\cdot 0.48\right) \\
      &\ge \frac{\alpha}{2}\left(1-0.48\right)\left(1-2\cdot 0.48\right)=0.0104\alpha,
  \end{align*} since $F_\alpha(1)\ge \frac{1}{2\alpha}$ from Proposition \ref{thm:asympsalpha}. So to show \eqref{moone}, it is sufficient to show \begin{align}
      0.0052\alpha \underset{\textup{wts}}{\ge} \frac1{(1+t)^2}+\frac{F_\alpha(1)}{(2+t)^2}-\frac{F_\alpha'(1)}{2+t}+\frac{10}{2^\alpha}.
      \label{everybodyathtebargettingtipsy}
  \end{align} We know $F_\alpha(1)\le \frac1{2\alpha-4}$ and $-F_\alpha'(1)\le \frac{\alpha}{2\alpha-4}$, from Proposition \ref{thm:asympsalpha}, so if $\alpha\ge 1000$ and $t \in \left[-\frac1{2},0\right)$, then \begin{align*}
      \frac1{(1+t)^2}+\frac{F_\alpha(1)}{(2+t)^2}-\frac{F_\alpha'(1)}{2+t}+\frac{10}{2^\alpha}\le 4+\frac{4}{9}\cdot \frac1{1996}+\frac2{3}\cdot \frac{1000}{1996}+\frac{10}{2^{1000}}<5.
  \end{align*} If $\alpha\ge 1000$, then $0.0052\alpha\ge 5.2$, so this shows \eqref{everybodyathtebargettingtipsy}, and takes care of Case 1.\newline
  
  \noindent
  \underline{\textbf{Case 2: Showing \eqref{thingtoshow} when $t \in \left[-\frac1{2},0\right)$ and $ F_\alpha(1+t)\ge 0.48$.}}\newline
  \noindent 
  From Proposition \ref{thm:asympsalpha} we may deduce $F_\alpha(1)\le \frac1{\alpha}$, $F_\alpha'(1)\le \frac{\alpha}{2\alpha-4}$, and since we are considering $t \in \left[-\frac1{2},0\right)$, it is sufficient to show \begin{equation}
  \begin{split}
      &\frac{F_\alpha(1+t)-F_\alpha(1)-tF_\alpha'(1)}{t^2}+\sum_{n \in \mathbb{Z}\setminus\{0\}}\frac{F_\alpha(1+t)}{\left(n-t\right)^2} \\
      &\underset{\textup{wts}}{\ge} \frac1{(1+t)^2}+\frac{4}{9\alpha}+\frac{\frac{\alpha}{2\alpha-4}}{2+t}+\frac{10}{2^\alpha}.
  \end{split}
       \label{weakenversionof4.50}
  \end{equation}  We start by showing that $ F_\alpha(1+t)-F_\alpha(1)-tF_\alpha'(1)\ge 0$ if $-\frac1{2}\le t\le 0$.
  \begin{proof}
      First suppose $t$ is such that $F_\alpha(1+t)<\frac1{2}$. Then by the mean value theorem, \begin{align*}
     \frac{F_\alpha(1+t)-F_\alpha(1)-tF_\alpha'(1)}{t^2}=\frac1{2}F''_\alpha(1+s)
\end{align*} for some $s \in [t,0]$, and if $s\ge t$, then $F_\alpha(1+s)\le F_\alpha(1+t)<\frac1{2}$, so from \eqref{Falphaduobleprime}, $F_\alpha''(1+s)\ge 0$, and so \begin{align*}
    F_\alpha(1+t)-F_\alpha(1)-tF_\alpha'(1)\ge 0.
\end{align*} If on the other hand $t \in \left[-\frac1{2},0\right)$ is such that $F_\alpha(1+t)\ge \frac1{2}$, then from the bounds of $F_\alpha(1)$, $F_\alpha'(1)$ in Proposition \ref{thm:asympsalpha}, \begin{align*}
    F_\alpha(1+t)-F_\alpha(1)-tF_\alpha'(1)\ge \frac1{2}-\frac1{2\alpha-4}-\frac{\alpha}{4\alpha-8}\ge 0,
\end{align*} since $|t|\le \frac1{2}$ and $\alpha\ge 1000$, so $F_\alpha(1+t)-F_\alpha(1)-tF_\alpha'(1)\ge 0$ for $t \in \left[-\frac1{2},0\right]$, $\alpha\ge 1000$
  \end{proof}
    This implies it is sufficient to show
\begin{align}
    \sum_{n \in \mathbb{Z}\setminus\{0\}}\frac{F_\alpha(1+t)}{\left(n-t\right)^2}\underset{\textup{wts}}{\ge} \frac1{(1+t)^2}+\frac{4}{9\alpha}+\frac{\frac{\alpha}{2\alpha-4}}{2+t}+\frac{10}{2^\alpha}
    \label{furelise}
\end{align} for $t \in \left[-\frac1{2},0\right]$ such that $F_\alpha(1+t)\ge 0.48$, to show \eqref{weakenversionof4.50}.\newline

\noindent
If $t\ge -0.1$, we use the fact that $\sum_{n \in \mathbb{Z}\setminus \{0\}}\frac1{(n-t)^2}$ is increasing in $|t|$, in which case it is sufficient to show \begin{align*}
    0.48\sum_{n \in \mathbb{Z}\setminus\{0\}}\frac1{n^2}=\frac{0.48\pi^2}{3}\ge \frac1{0.9^2}+\frac4{9\alpha}+\frac{\frac{\alpha}{2\alpha-4}}{1.9}+\frac{10}{2^\alpha},
\end{align*} which holds if $\alpha\ge 1000$. If $t\le -0.1$, then $F_\alpha(1+t)\ge F_\alpha(0.9)$, so it suffices to show \begin{align*}
    \sum_{n \in \mathbb{Z}\setminus \{0\}}\frac{F_\alpha(0.9)}{(n-t)^2}\ge \frac1{(1+t)^2}+\frac{4}{9\alpha}+\frac{\frac{\alpha}{2\alpha-4}}{2+t}+\frac{10}{2^\alpha}.
\end{align*} From the bounds of $F_\alpha(1)$, $F_\alpha'(1)$ in Proposition \ref{thm:asympsalpha}, we may deduce $F_\alpha(0.9)>0.999$ for $\alpha\ge 1000$. One can check that \begin{align*}
    \frac{0.999}{\left(1+t\right)^{2}}+\frac{0.999}{\left(1-t\right)^{2}}\ge \frac{1}{(1+t)^{2}}+\frac{4}{9\alpha}+\frac{\frac{\alpha}{2\alpha-4}}{2+t}+\frac{10}{2^{\alpha}}
\end{align*}  for $-\frac1{2}\le t\le -0.1$, which implies the desired inequality.
This completes the proof of \eqref{vivalavida}, taking care of the $\eta(x)=1$ case.\newline

It remains to show \eqref{mi} for $\eta(x)\ge 2$ to complete the proof of $\psi_\alpha(x)\le F_\alpha(x)$, and hence Theorem \ref{measure}.

\subsection{Proving \texorpdfstring{\eqref{mi}}{Lg} for \texorpdfstring{$\eta(x)\ge 2$}{Lg}:}\label{etaxis2}
We go back to the inequality we want to show, \eqref{mi}: \begin{equation}
    \begin{split}
        &F_\alpha(x)-\frac{\sin^2\left(\pi t\right)}{\pi^2t^2}\left(F_\alpha\left(\eta(x)\right)+tF_\alpha'\left(\eta(x)\right)\right) \\
    &\underset{\textup{wts}}{\ge} \frac{\sin^2\left(\pi t\right)}{\pi^2}\sum_{n \in \mathbb{Z}\setminus\{\eta(x)\}}\left(\frac{F_\alpha(n)}{\left(x-n\right)^2}+\frac{F_\alpha'(n)}{x-n}\right),
    \end{split}
    \label{hozierr}
\end{equation} where $x=\eta(x)+t$. We start by observing that the left-hand side above is non-negative. This is clearly true if $F_\alpha\left(\eta(x)\right)+tF_\alpha'\left(\eta(x)\right)\le 0$, so we consider the case where $F_\alpha\left(\eta(x)\right)+tF_\alpha'\left(\eta(x)\right)>0$, in which case we can right the left-hand side of \eqref{hozierr} as \begin{align*}
    &F_\alpha\mathopen{}\left(\eta(x)+t\right)\mathclose{}-F_\alpha\left(\eta(x)\right)-tF_\alpha'\mathopen{}\left(\eta(x)\right)\mathclose{}\\
    &+\mathopen{}\left(1-\frac{\sin^2\left(\pi t\right)}{\pi^2t^2}\right)\mathclose{}\mathopen{}\left(F_\alpha\left(\eta(x)\right)+tF_\alpha'\left(\eta(x)\right)\right)\mathclose{},
\end{align*} so it is sufficient to show \begin{equation*}
    F_\alpha\mathopen{}\left(\eta(x)+t\right)\mathclose{}-F_\alpha\left(\eta(x)\right)-tF_\alpha'\left(\eta(x)\right)\underset{\textup{wts}}{\ge} 0,
\end{equation*} when $F_\alpha\left(\eta(x)\right)+tF_\alpha'\left(\eta(x)\right)>0$. By the mean value theorem, we know  \begin{equation*}
    F_\alpha\left(\eta(x)+t\right)-F_\alpha\left(\eta(x)\right)-tF_\alpha'\left(\eta(x)\right)\ge \frac1{2}\min_{|t|\le \frac1{2}}F_\alpha''\left(\eta(x)+t\right).
\end{equation*} Now if $\eta(x)\ge 2$ and $|t|\le \frac1{2}$, then $\eta(x)+t\ge 2-\frac1{2}=\frac3{2}$ and so $F_\alpha\left(\eta(x)+t\right)\le F_\alpha\mathopen{}\left(\frac3{2}\right)\mathclose{}<\frac1{2}$. From the formula for $F_\alpha''$ in \eqref{Falphaduobleprime}, this means $F_\alpha''\left(\eta(x)+t\right)\ge 0$, and so \begin{equation*}
    F_\alpha\left(\eta(x)+t\right)-F_\alpha\left(\eta(x)\right)-tF_\alpha'\left(\eta(x)\right)\ge 0,
\end{equation*} as desired. This shows the left-hand side of \eqref{hozierr} is non-negative, so to show \eqref{hozierr}, it is sufficient to show \begin{align}
 \sum_{n \in \mathbb{Z}\setminus\{\eta(x)\}}\left(\frac{F_\alpha(n)}{\left(x-n\right)^2}+\frac{F_\alpha'(n)}{x-n}\right) \underset{\textup{wts}}{\le} 0.
    \label{madisonbeer}
\end{align} 
When $6\le \alpha\le 14$ and $1.5\le x\le 10$, we verify this using interval arithmetic in Julia \cite{IntervalArithmetic.jl}, available in the following Github repository \cite{edwin2024}. For the complementary region, we prove that by hand:
\begin{proof}[Proof of \eqref{madisonbeer} when $6\le \alpha\le 14$ and $x\ge 10$, or $\alpha\ge 16$ and $x\ge 1.5$]
    We extract the $n=-\eta(x)$ term in \eqref{madisonbeer}, so it is equivalent to show \begin{equation*}
    \sum_{\substack{n \in \mathbb{Z} \\ |n| \neq \eta(x)}}\left(\frac{F_\alpha(n)}{\left(x-n\right)^2}+\frac{F_\alpha'(n)}{x-n}\right)\underset{\textup{wts}}{\le} -\left(\frac{F_\alpha\left(\eta(x)\right)}{\left(x+\eta(x)\right)^2}-\frac{F_\alpha'\left(\eta(x)\right)}{x+\eta(x)}\right).
\end{equation*} From \eqref{ipadair},  we may deduce \begin{align*}
    &x^4 \sum_{\substack{n \in \mathbb{Z} \\ |n| \neq \eta(x)}}\left(\frac{F_\alpha(n)}{\left(x-n\right)^2}+\frac{F_\alpha'(n)}{x-n}\right) \\
    &=-2x^2\left(F_\alpha(\eta(x))+\eta(x)F_\alpha'(\eta(x))\right)+\sum_{\substack{n \in \mathbb{Z} \\ |n|\neq \eta(x)}}\left(3n^2F_\alpha(n)+n^3F_\alpha'(n)\right) \\
    &+\sum_{\substack{n \in \mathbb{Z} \\ |n|\neq \eta(x)}}\frac{2n^6F_\alpha(n)}{\left(x^2-n^2\right)}+\sum_{\substack{n \in \mathbb{Z} \\ |n|\neq \eta(x)}}\frac{5n^4F_\alpha(n)+n^5F_\alpha'(n)}{x^2-n^2},
\end{align*} so it is sufficient to show \begin{equation}
\begin{split}
    &\sum_{\substack{n \in \mathbb{Z}}}\left(3n^2F_\alpha(n)+n^3F_\alpha'(n)\right)+\sum_{\substack{n \in \mathbb{Z} \\ |n|\neq \eta(x)}}\left(\frac{2n^6F_\alpha(n)}{\left(x^2-n^2\right)}+\frac{5n^4F_\alpha(n)+n^5F_\alpha'(n)}{x^2-n^2}\right) \\
    &\underset{\textup{wts}}{\le}  2x^2\left(F_\alpha(\eta(x))+\eta(x)F_\alpha'(\eta(x))\right)+6\eta(x)^2F_\alpha(\eta(x))+2\eta(x)^3F_\alpha'(\eta(x)) \\
    &-x^4\left(\frac{F_\alpha\left(\eta(x)\right)}{\left(x+\eta(x)\right)^2}-\frac{F_\alpha'\left(\eta(x)\right)}{x+\eta(x)}\right).
\end{split}
    \label{worn}
\end{equation}
Note the inequality in \eqref{obrien} implies \begin{align*}
    \sum_{\substack{n \in \mathbb{Z} \\ |n|\neq \eta(x)}}\frac{2n^6F_\alpha(n)}{\left(x^2-n^2\right)^2}\le \frac2{s_\alpha^\alpha}\sum_{\substack{n \in \mathbb{Z} \\ |n|\neq \eta(x)}}\frac1{\left(x^2-n^2\right)^2}\le \frac{16}{s_\alpha^\alpha x^2},
\end{align*} and if $|n|\neq \eta(x)$, then $\left|x^2-n^2\right|\ge \frac{x}{2}$, hence \begin{equation*}
    \begin{split}
        \sum_{\substack{n \in \mathbb{Z} \\ |n|\neq k}}\frac{5n^4F_\alpha(n)+n^5F_\alpha'(n)}{x^2-n^2}&\le \frac{10F_\alpha(1)+2F_\alpha'(1)}{x^2-1} \\
&+\frac2{x}\sum_{n=2}^\infty\left|10n^4F_\alpha(n)+2n^5F_\alpha'(n)\right|.
    \end{split}
\end{equation*} Combining these two inequalities, we get \begin{equation}
\begin{split}
    & \sum_{\substack{n \in \mathbb{Z}}}\left(3n^2F_\alpha(n)+n^3F_\alpha'(n)\right)+\sum_{\substack{n \in \mathbb{Z} \\ |n|\neq \eta(x)}}\left(\frac{2n^6F_\alpha(n)}{\left(x^2-n^2\right)}+\frac{5n^4F_\alpha(n)+n^5F_\alpha'(n)}{x^2-n^2}\right)  \\
   &\le \sum_{\substack{n \in \mathbb{Z}}}\left(3n^2F_\alpha(n)+n^3F_\alpha'(n)\right)+\frac{16}{s_\alpha^\alpha x^2}+\frac{10F_\alpha(1)+2F_\alpha'(1)}{x^2-1}\\
   &+\frac2{x}\sum_{n=2}^\infty\left|10n^4F_\alpha(n)+2n^5F_\alpha'(n)\right|.
\end{split}
    \label{durk}
\end{equation}
 On the other hand, if $\eta(x)\ge 2$, \begin{align*}
 &2x^2\left(F_\alpha(\eta(x))+\eta(x)F_\alpha'(\eta(x))\right)+6\eta(x)^2F_\alpha(\eta(x))+2\eta(x)^3F_\alpha'(\eta(x)) \\
 &-x^4\left(\frac{F_\alpha\left(\eta(x)\right)}{\left(x+\eta(x)\right)^2}-\frac{F_\alpha'\left(\eta(x)\right)}{x+\eta(x)}\right) \\
 &\ge 2x^2\eta(x)F_\alpha'(\eta(x))+2\eta(x)^3F_\alpha'(x)-x^2F_\alpha(\eta(x))+x^3F_\alpha'(\eta(x)) \\
 &\ge 8\eta(x)^3F_\alpha'(\eta(x))-2\eta(x)^2F_\alpha(x) .
\end{align*} From \eqref{emu}, $xF_\alpha'(x)\ge -\alpha F_\alpha(x)$, so this implies \begin{align*}
    &2x^2\left(F_\alpha(\eta(x))+\eta(x)F_\alpha'(\eta(x))\right)+6\eta(x)^2F_\alpha(\eta(x))+2\eta(x)^3F_\alpha'(\eta(x)) \\
    &-x^4\left(\frac{F_\alpha\left(\eta(x)\right)}{\left(x+\eta(x)\right)^2}-\frac{F_\alpha'\left(\eta(x)\right)}{x+\eta(x)}\right) \\
    & \ge -(2+8\alpha)\eta(x)^2F_\alpha(\eta(x))\ge -\frac{8\alpha+2}{\eta(x)^{\alpha-2}}.
\end{align*} Putting this and \eqref{durk} together, we see that to prove \eqref{worn}, it is sufficient to show \begin{equation}
   \begin{split}
        &\sum_{\substack{n \in \mathbb{Z}}}\left(3n^2F_\alpha(n)+n^3F_\alpha'(n)\right)+\frac{16}{s_\alpha^\alpha x^2}+\frac{10F_\alpha(1)+2F_\alpha'(1)}{x^2-1} \\
    &+\frac2{x}\sum_{n=2}^\infty\left|10n^4F_\alpha(n)+2n^5F_\alpha'(n)\right|\underset{\textup{wts}}{\le} -\frac{8\alpha+2}{\eta(x)^{\alpha-2}}.
   \end{split}
    \label{allthestars}
\end{equation} 
If $6\le \alpha\le 14$ and $x\ge 10$, we can see this is true, since \begin{align*}
     &\sum_{\substack{n \in \mathbb{Z}}}\left(3n^2F_\alpha(n)+n^3F_\alpha'(n)\right)+\frac{16}{100s_\alpha^\alpha}+\frac{10F_\alpha(1)+2F_\alpha'(1)}{99} \\
     &+\sum_{n=2}^\infty\frac{\left|10n^4F_\alpha(n)+2n^5F_\alpha'(n)\right|}{5}+\frac{8\alpha+2}{10^{\alpha-2}}\le 0.
\end{align*} This is verified via interval arithmetic in Julia \cite{IntervalArithmetic.jl}, in the following Github repository \cite{edwin2024}.
If $\alpha \ge 16$, we can prove \eqref{allthestars} for all $x\ge 1.5$: It is enough to check that \begin{align*}
     &\sum_{\substack{n \in \mathbb{Z}}}\left(3n^2F_\alpha(n)+n^3F_\alpha'(n)\right)+\frac{16}{2.25s_\alpha^\alpha }+\frac{10F_\alpha(1)+2F_\alpha'(1)}{1.25} \\
     &+\frac2{1.5}\sum_{n=2}^\infty\left|10n^4F_\alpha(n)+2n^5F_\alpha'(n)\right| \le -\frac{8\alpha+2}{2^{\alpha-2}},
\end{align*} done below. If $\alpha\ge 16$, then \begin{align*}
    \sum_{\substack{n \in \mathbb{Z} \\ |n|\ge 2}}n^4F_\alpha(n)\le \frac1{s_\alpha^\alpha}\sum_{n=2}^\infty \frac2{n^{\alpha-4}}\le \frac1{s_\alpha^\alpha}\cdot \frac2{2^{\alpha-4}}\left(\frac{\alpha-3}{\alpha-5}\right)
\end{align*} from Lemma \ref{estimateforsumoverpowers}, and from  \eqref{emu}, $-xF_\alpha'(x)\le \alpha F_\alpha(x)$, so this implies \begin{align*}
     \sum_{n \in \mathbb{Z}}\left(3n^2F_\alpha(n)+n^3F_\alpha'(n)\right)&\le 6F_\alpha(1)+2F_\alpha'(1)+\frac{6}{s_\alpha^\alpha 2^{\alpha-4}}\left(\frac{\alpha-3}{\alpha-5}\right) \\
     &\le -1+\frac7{2\alpha-4}+\frac1{2^{\alpha-4}}\cdot \frac{6(\alpha-3)}{(2\alpha-5)(\alpha-5)} \\
     &\le -1+\frac7{2\alpha-4}+\frac1{2^{\alpha-4}}
\end{align*} if $\alpha\ge 16$, from the bounds in Proposition \ref{thm:asympsalpha}. Similarly, \begin{align*}
      \sum_{n =2}^\infty\left|10n^4F_\alpha(n)+2n^5F_\alpha'(n)\right|&\le \sum_{n=2}^\infty \left(10+2\alpha\right)n^4F_\alpha(n)\le  \frac{2\alpha+10}{s_\alpha^\alpha 2^{\alpha-4}}\left(\frac{\alpha-3}{\alpha-5}\right) \\
      &\le \frac{2\alpha+10}{(2\alpha-5)2^{\alpha-4}}\left(\frac{\alpha-3}{\alpha-5}\right)\le \frac2{2^{\alpha-4}},
\end{align*} and \begin{align*}
    10F_\alpha(1)+2F_\alpha'(1)\le \frac{11}{2\alpha-4}-1,
\end{align*} from the bounds in Proposition \ref{thm:asympsalpha}, if $\alpha\ge 16$. Consequently, if $\eta(x)\ge 2$ and $\alpha\ge 16$, then \begin{align*}
    &\sum_{\substack{n \in \mathbb{Z}}}\left(3n^2F_\alpha(n)+n^3F_\alpha'(n)\right)+\frac{16}{2.25s_\alpha^\alpha }+\frac{10F_\alpha(1)+2F_\alpha'(1)}{1.25} \\
    &+\frac2{1.5}\sum_{n=2}^\infty\left|10n^4F_\alpha(n)+2n^5F_\alpha'(n)\right| \\
    &\le -1+\frac7{2\alpha-4}+\frac1{2^{\alpha-4}}+\frac{\frac{11}{2\alpha-4}-1}{1.25}+\frac{16}{2.25s_\alpha^\alpha}+\frac4{1.5\cdot 2^{\alpha-4}}\le -\frac{8\alpha+2}{2^{\alpha-2}},
\end{align*} as desired.
\end{proof}

\section*{Acknowledgements}
The authors would like to thank Professor Henry Cohn of the MIT Department of Mathematics for his advice, mentorship and comments in the course of writing this paper.


\begin{thebibliography}{9}

\bibitem{radin1987low} \textsc{Radin, Charles}, Low temperature and the origin of crystalline symmetry. \textit{Internat. J. Modern Phys. B} \textbf{1} (1987), 1157--1191.
\MR{932349}

\bibitem{cohn2022universal} \textsc{Cohn, Henry and Kumar, Abhinav and Miller, Stephen D. and
Radchenko, Danylo and Viazovska, Maryna}, Universal optimality of the {$E_8$} and {L}eech lattices and
interpolation formulas. \textit{Ann. of Math. (2)} \textbf{196} (2022), 983--1082.
\MR{4502595}

\bibitem{cohn2007universally} \textsc{Cohn, Henry and Kumar, Abhinav}, Universally optimal distribution of points on spheres. \textit{J. Amer. Math. Soc.} \textbf{20} (2007), 99--148.
\MR{2257398}

\bibitem{ventevogel1979configuration} \textsc{Ventevogel, W. J. and Nijboer, B. R. A.}, On the configuration of systems of interacting particles with
minimum potential energy per particle. \textit{Phys. A} \textbf{98} (1979), 274--288.
\MR{546896}

\bibitem{henn2016hexagonal} \textsc{Henn, Andreas}, The hexagonal lattice and the {E}pstein zeta function. \textit{Dynamical systems, number theory and applications.} \textit{World Sci. Publ., Hackensack, NJ}  (2016), 127--140.
\MR{3444244}

\bibitem{montgomery1988minimal} \textsc{Montgomery, Hugh L.}, Minimal theta functions. \textit{Glasgow Math. J.}  \textbf{30} (1988), 75--85.
\MR{925561}

\bibitem{nijboer1985minimum} \textsc{Nijboer, B. R. A. and Ruijgrok, Th.\ W.}, On the minimum-energy configuration of a one-dimensional system of particles interacting with the potential {$\phi(x)=(1+ x^4)^{-1}$}. \textit{Phys. A}  \textbf{133} (1985), 319--329.
\MR{925561}

\bibitem{radchenko2019fourier} \textsc{Radchenko, Danylo and Viazovska, Maryna}, Fourier interpolation on the real line. \textit{Publ. Math. Inst. Hautes \'Etudes Sci.}  \textbf{129} (2019), 51--81.
\MR{3949027}


\bibitem{borodachov2019discrete} \textsc{Borodachov, Sergiy V. and Hardin, Douglas P. and Saff, Edward B.}, Discrete energy on rectifiable sets. \textit{Springer Monographs in Mathematics,} \textit{Springer, New York}   (2019).
\MR{3970999}

\bibitem{gausscoremodel} \textsc{Cohn, Henry and de Courcy-Ireland, Matthew}, The {G}aussian core model in high dimensions. \textit{Duke Math. J.} \textbf{167}   (2018), 2417--2455.
\MR{3855354}

\bibitem{IntervalArithmetic.jl} \textsc{Sanders, David P. and Benet, Luis}, IntervalArithmetic.jl. \textit{https://github.com/JuliaIntervals/IntervalArithmetic.jl}   (2014).

\bibitem{bezanson2017julia} \textsc{Bezanson, Jeff and Edelman, Alan and Karpinski, Stefan and Shah, Viral B}, Julia: A fresh approach to numerical computing. \textit{SIAM review} \textbf{59}   (2017), 265--98.

\bibitem{bams/1183552525} \textsc{Vaaler, Jeffrey D.}, Some extremal functions in {F}ourier analysis. \textit{Bull. Amer. Math. Soc. (N.S.)} \textbf{12}   (1985), 183--216.
\MR{776471}

\bibitem{edwin2024} \textsc{Edwin, Roni}, Ground\_states Github repository. \textit{GitHub}, \textit{\url{https://github.com/Roni-Edwin/ground_states}}  \textbf{12}   (2025), 183--216.
\MR{776471}








\end{thebibliography}
\end{document}